\newtheorem{Theorem}{Theorem}
\newtheorem{Corollary}{Corollary}
\newtheorem{Lemma}{Lemma}
\newtheorem{Definition}{Definition}
\newtheorem{Assumptions}{Assumptions}
\newenvironment{proof}[1]{\medskip\par\noindent
{\bf Proof:\,}\,#1}{{\mbox{\,$\blacksquare$}\par}}
\newcommand{\Xv}{\mathbf{x}}
\newcommand{\Rb}{\mathbb{R}}
\newcommand{\Eb}{\mathbb{E}}
\newcommand{\Pb}{\mathbb{P}}
\newcommand{\lv}{\mathbf{1}}
\newcommand{\thetav}{\boldsymbol{\theta}}
\newcommand{\Tc}{\mathcal{T}}
\newcommand{\Sc}{\mathcal{S}}
\newcommand{\Cc}{\mathcal{C}}
\newcommand{\Fc}{\mathcal{F}}
\newcommand{\Pc}{\mathcal{P}}
\begin{document}
\IEEEoverridecommandlockouts

\title{Cost-Aware Learning and Optimization for Opportunistic Spectrum Access}

\author{Chao~Gan, 
	   Ruida~Zhou,
	    Jing~Yang,~\IEEEmembership{Member,~IEEE,}
        Cong~Shen,~\IEEEmembership{Senior~Member,~IEEE}

\thanks{ C. Gan and J. Yang are with the
School of Electrical Engineering and Computer Science, The Pennsylvania State University, University Park, PA, USA (e-mail: cug203@psu.edu, yangjing@psu.edu).
}
\thanks{R. Zhou and C. Shen are with the Department of Electronic Engineering and Information Science, University of Science and Technology of China, Hefei, Anhui, China (e-mail: zrd127@mail.ustc.edu.cn, congshen@ustc.edu.cn).}
}

\maketitle
\thispagestyle{empty}

\begin{abstract}
	In this paper, we investigate cost-aware joint learning and optimization for multi-channel opportunistic spectrum access in a cognitive radio system. We investigate a discrete-time model where the time axis is partitioned into frames. Each frame consists of a sensing phase, followed by a transmission phase. During the sensing phase, the user is able to sense a subset of channels sequentially before it decides to use one of them in the following transmission phase. We assume the channel states alternate between {\it busy} and {\it idle} according to independent Bernoulli random processes from frame to frame. To capture the {\it inherent uncertainty} in channel sensing, we assume the reward of each transmission when the channel is idle is a random variable.  We also associate {\it random costs} with sensing and transmission actions. Our objective is to understand how the costs and reward of the actions would affect the optimal behavior of the user in both offline and online settings, and design the corresponding opportunistic spectrum access strategies to maximize the expected cumulative net reward (i.e., reward-minus-cost).
	
	We start with an offline setting where the statistics of the channel status, costs and reward are known beforehand.
	We show that the the optimal policy exhibits a recursive double-threshold structure, and the user needs to compare the channel statistics with those thresholds sequentially in order to decide its actions. With such insights, we then study the online setting, where the statistical information of the channels, costs and reward are unknown a priori. We judiciously balance exploration and exploitation, and show that the cumulative regret scales in $O(\log T)$. We also establish a matched lower bound, which implies that our online algorithm is order-optimal. Simulation results corroborate our theoretical analysis.
	
	
\end{abstract}
\begin{IEEEkeywords}
	Opportunistic spectrum access; sensing cost; sensing uncertainty; cascading bandits.
\end{IEEEkeywords} 

\section{Introduction}
Advanced channel sensing technologies have enabled cognitive radio systems to acquire the channel status in real-time and exploit the temporal, spatial and spectral diversity of wireless communication channels for performance improvements~\cite{Liang:2011:CR}. Various opportunistic spectrum access strategies have been investiageted, under both offline settings where the channel statistics are known a priori~\cite{Kanodia:2004:MMO,Shu:2009:TSC,Sabharwal:2007:OSU,Liu:Infocom:2003,Andrews:2001:PQS,Borst:Infocom:2001,Chang:2009:OCP,Tan:2010:DOS}, and online settings where the users do not possess a priori channel statistics but will have to infer them from observations~\cite{Lai:CMA:2011,Lai:2008:MAC,Liu:2010:DMAB,Gai:dyspan:2010,Gai:2012:CNO,Liu:2015:OAD,Chen:2008:OSA,Zhao:2008:Myopic,Javidi:2008:Myopic,Liu:2008:Index,Ahmad:2009:MOA}.

While the main objective in such works is to improve the spectrum usage efficiency by leveraging the channel status measurements, the inherent uncertainty in channel sensing results, and the costs of channel sensing and transmission, are rarely investigated. In practice, even a channel is sensed to be idle, the transmission rate it can support is still uncertain, due to the inherent randomness of wireless medium. Thus, the {\it reward} of each transmission is random in general. Meanwhile, both sensing and transmission consume energy. Channel sensing also causes delay. For cognitive radio systems that operate under stringent energy and power constraints, or communication applications that can only tolerate short end-to-end delays, such {\it costs} become critical in determining the optimal operation of the cognitive radio system. Intuitively, the optimal spectrum access strategy depends on the intricate relationship between the costs and reward, as well as channel statistics. What makes the problem even more complicated is that the statistics of such quantities are often time-varying and unknown beforehand in the fast changing radio environment.

Within this context, in this paper, we investigate cost-aware learning and optimization for multi-channel opportunistic spectrum access in a cognitive radio system. Our objective is to analytically characterize the impact of the costs and reward of sensing and transmission on the optimal behavior of the user, and develop optimal cost-aware opportunistic spectrum access strategies in both offline and online settings. To this end, we adopt a discrete-time model, where the state of each channel evolves according to an independent Bernoulli process from time frame to time frame. The user is allowed to {\it sequentially} sense the channels at the beginning of each time frame to get measurements of the instantaneous channel states. The user then uses the measurements to decide its actions, i.e., whether to continue sensing, to transmit over one channel, or to quit the current frame. We associate random costs with sensing and transmission, and assign a positive random reward for each transmission. 
For the offline setting, we leverage the finite horizon dynamic programming (DP) formulation~\cite{Kumar:1986:SSE} to identify the optimal policy of the user to maximize the expected {\em net} reward (reward-minus-cost) in each frame. For the online setting, we cast the problem to the multi-armed bandit (MAB) framework~\cite{Lai:1985:AEA}, and propose a cost-aware online learning strategy to infer the statistics of the channels, costs and reward, and asymptotically achieve the maximum per-frame {\em net} reward.

\subsection{Main Contributions}
The main contributions of this paper are four-fold:

First, we identify the optimal offline spectrum access policy with a priori statistics of the channel states, the costs and reward of sensing and transmissions. The optimal offline policy exhibits a unique recursive double-threshold structure.
The thresholds depend on the statistical information of the system, and can be determined in a recursive fashion. Such structural properties enable an efficient way to identify the optimal actions of the transceiver, and serve as the benchmark for the online algorithm developed in sequel.

Second, we propose an online algorithm to infer the statistics from past measurements, and track the optimal offline policy at the same time. In order to make the algorithm analytically tractable, we decouple the exploration stage and the exploitation stage. We judiciously control the length of the exploration stage to ensure that the sensing and transmission policy in the exploitation stage is identical to the optimal offline policy with high probability. We theoretically analyze the cumulative regret, and show that it scales in $O(\log T)$.

Third, we establish a lower bound on the regret for any $\alpha$-consistent online strategies. $\alpha$-consistent strategies are those that perform reasonably well with high probability. The lower bound scales in $\Omega(\log T)$, which matches the upper bound in terms of scaling and implies that our online algorithm is order-optimal. To the best of our knowledge, this is the first online opportunistic spectrum access strategy achieving order-optimal regret when sequential sensing is considered.

Fourth, the online setting discussed in this paper is closely related to the cascading multi-armed bandits model~\cite{KvetonSWA15} in machine learning. The cost-aware learning strategy proposed in this paper extends the standard cost-oblivious cascading bandits in~\cite{KvetonSWA15}, and can be adapted and applied to a wide range of applications where the cost of pulling arms is non-negligible and the rank of arms affects the system performances, such as web search and dynamic medical treatment allocation. 

\subsection{Relation to the State of the Art}
Learning for multi-channel dynamic spectrum access is often cast in the MAB framework~\cite{Lai:1985:AEA}. In general, the classic non-Bayesian MAB assumes that there exist $K$ independent arms, each generating i.i.d. rewards over time from a given family of distributions with an unknown parameter. The objective is to play the arms for a time horizon $T$ to minimize the {\it regret}, i.e., the difference between the expected reward by always playing the best arm, and that without such prior knowledge. It has been shown that logarithmic regret is optimal \cite{Agrawal:1995,Auer:2002:FAM}. 

Within the MAB framework, order-optimal sensing and transmission policies for both single-user scenario~\cite{Lai:CMA:2011} and multiple-user scenario in~\cite{Lai:2008:MAC,Liu:2010:DMAB,Gai:dyspan:2010,Gai:2012:CNO}. In those works, the objectives are mainly to identify the best channel or channel-user match and access them most of the time in order to maximize the expected throughput. 

Although the online strategy developed in this paper falls in the MAB framework, the sequential sensing model, and the intricate impact of the sensing and transmission costs and reward on the system operation make our problem significantly different from existing works \cite{Lai:CMA:2011,Lai:2008:MAC,Liu:2010:DMAB}. Since the user will stop sensing if certain condition is satisfied, the random stopping time implies that only a random subset of channels will be observed in each sensing phase. Such partial observation model makes the corresponding theoretical analysis very challenging. Moreover, the error in estimating the mean value of the costs and reward will affect the correctness of the online policy and propagate in a recursive fashion, which makes the regret analysis extremely difficult. 


The sequential sensing model and analytical approach adopted in this paper is similar to that in~\cite{Chang:2009:OCP,Liu:2015:OAD}.
In~\cite{Chang:2009:OCP}, a constant sensing cost is considered for each channel, and the optimal offline probing and transmission scheduling policy is obtained through DP formulation. The corresponding online algorithm is proposed in~\cite{Liu:2015:OAD}. Compared with \cite{Chang:2009:OCP,Liu:2015:OAD}, our model takes the randomness of the sensing/transmission costs and reward into consideration, which is a non-trivial extension. Besides, the Bernoulli channel status model adopted in our paper enables us to obtain the explicit structure of the optimal offline policy and the order-optimal online algorithm.

\subsection{Paper Outline}
This paper is organized as follows. Section \ref{sec:system_model} describes the system model. Section \ref{sec:offline} and Section~\ref{sec:online} describe the optimal offline policy and the online algorithm, respectively. Section \ref{sec:simulation} evaluates the proposed algorithms through simulations. Concluding remarks are provided in Section \ref{sec:discussion}. Important proofs are deferred to Appendix.

\section{System Model}\label{sec:system_model}
We consider a single wireless communication link consisting of $K$ channels, indexed by the set $[K] :=\{1,2,\ldots,K\}$ and a user who would like to send information to a receiver using exactly one of the channels. We partition the time axis into frames, where each frame consists of a channel sensing phase followed by a transmission phase. The channel sensing phase consists of multiple time slots, where in each slot, the user is able to sense one of the channels in $[K]$, and obtain a measurement of the instantaneous channel condition. Similar sequential sensing mechanism has been discussed in~\cite{Shu:2009:TSC,Sabharwal:2007:OSU,Chang:2009:OCP}. In this work, we assume that the sensing phase is {\em at most} $K$ time slots, which corresponds to the scenario that the user senses each channel once in the time frame. As we will see, the actual length of the probing phase depends on the parameters of the system, and will be automatically adjusted to optimize the system performance.  
We adopte the constant data time (CDT) model studied in \cite{Sabharwal:2007:OSU,Chang:2009:OCP}, where the transmitter has {\it a fixed amount} of time for data transmission, regardless of how many channels it senses. The length of the transmission phase is much larger than the duration of a time slot in the sensing phase in general. 

The communication over the link proceeds as follows. Within each frame, at the beginning of each time slot in the sensing phase, the user must choose between two actions: 1) {\it sense}: sample a channel that has not been sensed before and get its status, 2) {\it stop}: end the sensing phase in the current frame. Once the user stops sensing, it must decide between the following actions based on up-to-date sensing results:
1) {\it access}: transmitting over one of the channels already sensed in the current frame using a predefined transmission power. 2) {\it guess}: transmitting over one of the channels that have not been sensed in the current frame, 3) {\it quit}: giving up the current frame and wait until the next frame. 

\if{0}
\begin{figure}[t]
	\begin{center}
		\includegraphics[scale=0.65]{graph/frame.eps}
	\end{center}
	\vspace{-0.1in}
	\caption{Structure of a communication frame.}\label{fig:frame}
\end{figure}
\fi

In the following, we use $t=1,2,\ldots,T$ to index the time frames, and use $n=1,2,\ldots,N_t$ to index the time slots within a frame, where $N_t$ is the last time slot in frame $t$.  $(t,n)$ refers to the $n$-th time slot in the sensing phase of frame $t$. 
Let $\alpha(t,n)\in[K]$ be the channel the user sensed at time $(t,n)$ for $n\leq N_t$. We use $\beta(t)\in[K]\cup\{0\}$ to denote channel the the user decides to user in the transmission phase in frame $t$. $\beta(t)=0$ indicates that it quits the transmission opportunity in frame $t$. 
Let $C_{\alpha(t,n)}$, $B_{\beta(t)}$, and $P_{\beta(t)}$ be the probing cost, communication reward, and transmission cost associated with the decisions $\alpha(t,n)$ and $\beta(t)$, respectively. Those costs can refer to the energy consumed for sensing/transmission, the interference caused by the actions, etc, and can be adjusted according to the resource constraints or quality of service (QoS) requirements in the system. The reward may correspond to the information bits successfully delivered during the transmission phase. We do not assign any cost for the action {\it quit} for a clear exposition of this paper. We can always extend our current model to include a positive cost for quit, which can be used to capture certain QoS requirements (such as delay) in the system. As we will see in the rest of this paper, this will not change the structure of the optimal offline policy, or the design and analysis of the online algorithm.

We make the following assumptions on the distributions of the channel statistics, the sensing and transmission costs, and the communication reward.
\begin{Assumptions}\label{assump:dist}
	\begin{enumerate}
		\item[1)] The state of each channel $i\in[K]$ stays constant within frame $t$ (denoted as $X_i(t)$), and varies across frames according to an independent and identically distributed (i.i.d.) Bernoulli random process with parameter $\theta_i$. Without loss of generality, we assume that $1\geq \theta_1\geq \theta_2\ldots\geq \theta_K> 0$. 
		\item[2)]  If $\beta(t)\in [K]$, and $X_{\beta(t)}(t)=1$, $B_{\beta(t)}$ is an i.i.d. random variable distributed over $[\underline{b},\underline{b}+\Delta_b]$ with mean $b_0$; Otherwise, if $\beta(t)=0$, or if $\beta(t)\in [K]$ and $X_{\beta(t)}(t)=0$, $B_{\beta(t)}=0$.
		\item[3)] $C_{\alpha(t,n)}$ is an i.i.d. random variable distributed over $[\underline{c},\underline{c}+\Delta_c]$ with mean $c_0$.
		\item[4)] If $\beta(t)\in[K]$, $P_{\beta(t)}$ is an i.i.d. random variable distributed over $[\underline{p},\underline{p}+\Delta_p]$ with mean $p_0$; If  $\beta(t)=0$, $P_{\beta(t)}=0$.
		\item[5)] $b_{0}- p_{0}>0$, and $\underline{b},\underline{c},\underline{p}, \Delta_b, \Delta_c, \Delta_p\geq 0$. 
	\end{enumerate}
	
\end{Assumptions}

Assumption~\ref{assump:dist}.1 indicates that the status of a channel alternates between two states: idle and busy, which is a common assumption in existing works~\cite{Shu:2009:TSC,Lai:2008:MAC,Lai:CMA:2011}. Assumption~\ref{assump:dist}.2 is related to the fact that the maximum transmission rate supported by an available channel is random, due to the uncertain link condition in wireless medium. Assumptions~\ref{assump:dist}.3 and \ref{assump:dist}.4 correspond to the sensing and transmission costs in the system. We assume they are random variables in general. In practice, the costs may be related to the physical resources (e.g., energy/power) available in the system, or QoS (e.g., delay) requirements of different applications. Therefore, they are usually not fixed but adaptively changing in order to satisfy the instantaneous constraints. When $\Delta_c=\Delta_p=0$, they become two positive constants. We impose Assumption~\ref{assump:dist}.5 to make the problem reasonable and non-trivial.

In the following, we will first identify the structure of the optimal offline policy with all of the statistical information in Assumptions~\ref{assump:dist} known a priori, and then develop an online scheme to learn the statistics and track the optimal offline policy progressively.


\section{Optimal Offline Policy}\label{sec:offline}
In this section, we assume that the statistics $\{\theta_i\}_{i=1}^K$, $b_0$, $c_0$, $p_0$ are known a priori. However, the instantaneous realizations of the corresponding random variables remain unknown until actions are taken and observations are made. Thus, the user needs to make sensing and transmission decisions based on up-to-date observations, as well as the statistics. In the following, we use {\it policy} to refer to the rules that the user would follow in a frame. Specifically, this includes an order to sample the channels, a stopping rule to stop probing (and determine $N_t$), and a transmission rule to decide which channel to use, all based on past measurements. We note that due to the randomness in the system, the same policy may lead to different observations. Accordingly, the user may take different actions in sequel.

Under Assumptions~\ref{assump:dist}, when all of the statistics are known beforehand, the observations made in one frame would not provide any extra information about other frames. Thus, the optimal offline policy should be the same in each frame. In this section, we will drop the frame index and focus on one individual frame. Let $\{\alpha(n),\beta\}_{n=1}^{N}$ be the sequence of sensing and transmission actions the user takes following the policy $\pi$. Then, the optimization problem can be formulated as follows:
\begin{align}\label{eqn:offline_reward}
& J^*=\max_{\pi} J^\pi = \max_{\pi}\Eb\left[\Eb\left[B_{\beta}- P_{\beta}\middle |\{X_{\alpha(n)}\}_{n=1}^{N}\right]\right]\nonumber\\
&\qquad\qquad\qquad-\sum_{n=1}^{N} \Eb\left[\Eb\left[C_{\alpha(n)}\middle | \{X_{\alpha(\tau)}\}_{\tau=1}^{n-1}\right] \right].
\end{align}
We will use $\pi^*$ to denote the optimal policy that achieves $J^*$. Such a policy is guaranteed to exist since there are a finite number of channels. We have the following observations.
\begin{Lemma}\label{lemma:probe}
	Under the optimal policy, if the transmitter senses a channel and finds it is available, it should stop sensing and then transmit over it in the upcoming transmission phase. 
\end{Lemma}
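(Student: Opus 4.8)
The plan is a policy-improvement (exchange) argument: I would show that any policy which continues to sense after observing an idle channel can be modified, without loss, into one that stops and transmits over that channel, so an optimal policy can always be taken to have this structure. The entire argument rests on one bound. Conditioned on any collection of observed channel states, the largest attainable value of $\Eb[B_\beta-P_\beta]$ over all admissible transmission choices $\beta$ is exactly $b_0-p_0$, and it is achieved by transmitting over a channel already sensed to be idle. This follows by enumerating the options: transmitting over a known-idle channel gives $b_0-p_0$; guessing over an unsensed channel $j$ gives $\theta_j b_0-p_0\le b_0-p_0$; transmitting over a channel known to be busy gives $-p_0$; and quitting gives $0$. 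Since $b_0-p_0>0$ and $p_0\ge 0$ by Assumption~\ref{assump:dist}.5, the known-idle option weakly dominates every alternative.

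Given this bound, I would fix any policy $\pi$ and consider the first slot at which it senses some channel $i$ with $X_i=1$ but does not immediately stop and transmit over $i$. Let $\pi'$ agree with $\pi$ up to and including that slot, and then stop and transmit over $i$. Conditioned on the history up to that slot, the continuation reward of $\pi'$ equals $b_0-p_0$ exactly, because the observation $X_i=1$ makes $\Eb[B_i-P_i]=b_0-p_0$. For $\pi$, the continuation reward is the eventual transmission net reward minus the expected cost of whatever additional sensing it performs. Because channel $i$ remains a feasible transmission target, the eventual transmission net reward under any continuation is at most $b_0-p_0$ by the bound above, while each extra sense adds an expected cost $c_0\ge 0$. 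Hence the conditional continuation value of $\pi$ never exceeds that of $\pi'$, and taking expectations over histories yields $J^{\pi'}\ge J^\pi$, with the inequality strict whenever $c_0>0$ and $\pi$ senses again with positive probability. Thus no policy that keeps sensing after seeing an idle channel can be strictly optimal, which is the claim.

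I expect the main obstacle to be justifying the cap $b_0-p_0$ \emph{uniformly} over all continuation behaviors, rather than merely for the option of transmitting immediately; the worry is that further sensing reveals more channel states and might appear to create a better transmission opportunity. The resolution is that $b_0-p_0$ upper bounds $\Eb[B_\beta-P_\beta]$ for every individual choice of $\beta$ given any information whatsoever, so it also bounds the supremum over all continuation policies no matter how much they learn before transmitting; the value of the extra information is therefore nonpositive, while its cost is nonnegative. A secondary point to handle cleanly is the degenerate case $c_0=0$, where sensing is free and the domination is only weak: there the statement should be read as asserting the existence of an optimal policy with the stop-and-transmit structure, with strict optimality whenever sensing is costly. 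One could equivalently package this reasoning inside the Bellman optimality equation for the dynamic program in \eqref{eqn:offline_reward}, comparing the value of the ``stop-and-transmit'' action with that of the ``continue-sensing'' action at a state where an idle channel has just been discovered.
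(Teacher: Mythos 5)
Your proposal is correct and follows essentially the same reasoning as the paper's proof: the expected transmission net reward is capped at $b_0-p_0$ regardless of any further information, this cap is already achieved by transmitting over the channel just found idle, and any additional sensing only adds cost. Your write-up is simply a more rigorous version of the paper's argument, with the uniform-bound justification and the $c_0=0$ degenerate case (where domination is only weak) spelled out explicitly.
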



\begin{proof}
	First, we note that if the user transmits over a sampled and available channel, the first term in (\ref{eqn:offline_reward}) would be $b_{0}-p_{0}$, which cannot be improved further if the transmitter continues to {\it sense}, {\it guess} or {\it quit}. However, a continued sensing would increase the cost involved in the second term in (\ref{eqn:offline_reward}). Thus, transmitting over the channel would be the best action given that the transmitter samples a channel and finds it is available.
\end{proof}

\begin{Corollary}\label{cor:structure}
	There exist only two possible structures of the optimal policy: 1) The transmitter chooses to ``guess" without sensing any channel; 2) The user senses an ordered subset of channels sequentially, until it finds the first available channel. If none of them are available, it will decide to ``guess" or ``quit". Whenever the transmitter chooses to ``guess", it transmits over the best unsampled channel, i.e., the one with the maximum $\theta_i$.
\end{Corollary}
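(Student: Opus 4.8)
The plan is to derive the corollary as a structural consequence of Lemma~\ref{lemma:probe}, supplemented by a simple dominance argument for the terminal action and a monotonicity argument for the ``guess'' channel. I would organize the argument in three stages: (i) use Lemma~\ref{lemma:probe} to pin down the sensing behavior, (ii) classify the terminal decision reached when all sensed channels turn out busy, and (iii) identify the optimal guess target.

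For stage (i), Lemma~\ref{lemma:probe} asserts that the moment the user senses an \emph{idle} channel it should stop and transmit over it. Since each sensing action reveals only a binary idle/busy outcome and the response to ``idle'' is already dictated by the lemma, the only residual freedom lies in the response to ``busy,'' namely whether to probe a next channel and, if so, which one. I would argue that because the channels are mutually independent (Assumption~\ref{assump:dist}.1), observing that some probed channels are busy leaves the posterior distribution of every \emph{unsensed} channel unchanged, and the incurred sensing cost is sunk and additive. Hence the optimal continuation after an all-busy prefix depends only on the \emph{set} of channels still unsensed, not on the particular busy outcomes, so the optimal probing order is a fixed (non-adaptive) sequence $\alpha(1),\alpha(2),\ldots$ truncated at the first idle channel. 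This yields exactly the two cases of the statement: either the user probes no channel at all (structure~1), or it probes an ordered subset sequentially and stops at the first idle channel, defaulting to a terminal decision if all of them are busy (structure~2).

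For stage (ii), condition on the event that sensing has ended with every sensed channel busy. Accessing any such busy channel yields expected net reward $-p_{0}\le 0$, which never strictly exceeds the reward $0$ of ``quit''; hence ``access'' is weakly dominated and can be excluded, leaving only ``guess'' and ``quit.'' For stage (iii), a guess over an unsampled channel $i$ earns $B_i$ with probability $\theta_i$ and incurs the transmission cost $P_i$ regardless, so its expected net reward is $\theta_i b_0 - p_0$. This is strictly increasing in $\theta_i$, so the best guess is over the unsampled channel of largest $\theta_i$; by the ordering $\theta_1\ge\cdots\ge\theta_K$ this is the smallest-index unsensed channel. Comparing the best guess value $\theta_i b_0 - p_0$ against the quit value $0$ then decides between ``guess'' and ``quit.''

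I expect the main obstacle to be the independence-based reduction in stage (i): formally ruling out adaptive probing orders and showing that the continuation policy after an all-busy prefix is a deterministic function of the remaining channel set. The dominance claim in (ii) and the monotonicity claim in (iii) are routine once the expected net rewards $-p_0$, $\theta_i b_0 - p_0$, and $0$ are written down, but the non-adaptivity argument is where the independence assumption must be invoked carefully to justify the ``ordered subset'' phrasing of the corollary.
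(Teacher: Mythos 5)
Your proposal is correct and follows essentially the same route as the paper: the paper's own proof simply invokes Lemma~\ref{lemma:probe} to obtain the two structures and then notes that a ``guess'' should target the unsampled channel with maximum expected reward, i.e., maximum $\theta_i$. Your stages (i)--(iii) just make explicit the steps the paper leaves implicit (the independence-based non-adaptivity of the probing order, the dominance of ``quit'' over accessing a busy channel with value $-p_0\le 0$, and the monotonicity of $\theta_i b_0 - p_0$ in $\theta_i$), so there is no substantive difference in approach.
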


\begin{proof}
	Lemma~\ref{lemma:probe} indicates the structure of the optimal policy would be a pure ``guess", or a sequence of probing followed by a ``guess" or ``quit". It then becomes clear that the unsampled channel utilized for transmission should be the best unsampled channel, which gives the maximum expected reward.
\end{proof}

\begin{Corollary}\label{cor:state}
	Under the optimal policy, at any time slot, a sufficient information state is given by the tuple $(i,X_i,\Sc)$, where $\Sc$ is the set of unsensed channels and $i$ is the index of the last sensed channel in the frame. When $\Sc=[K]$, i.e., no channel has been sensed yet, the first two parameters equal zero.
\end{Corollary}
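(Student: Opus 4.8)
The plan is to establish that the tuple $(i, X_i, \Sc)$ is a \emph{sufficient statistic} for the sequential decision problem underlying (\ref{eqn:offline_reward}), in the sense that both the optimal continuation value and the optimal next action depend on the full observation history only through this tuple. I would work within the dynamic programming formulation: at any slot during the sensing phase, the complete history consists of the ordered list of channels sensed so far, their realized states, and the realized sensing costs (and the reward, if a transmission has already occurred). The goal is to show that, conditioned on the user still being in the sensing phase, none of these pieces of history matters for future optimal decisions beyond $(i, X_i, \Sc)$.

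First I would invoke Assumption~\ref{assump:dist}.1: since the channel states are mutually independent across channels, the conditional distribution of the states of the as-yet-unsensed channels, given the entire observed history, coincides with their prior and hence depends only on the identity of the remaining set $\Sc$. Consequently, the distribution of the future reward obtainable by sensing or guessing among the channels in $\Sc$ is a function of $\Sc$ alone. Next I would argue that the sensing costs already paid are \emph{sunk}: by Assumptions~\ref{assump:dist}.3--\ref{assump:dist}.4 the per-slot costs are i.i.d.\ and independent of the channel states and of all future quantities, so their realizations cannot influence any future optimal decision, while the \emph{expected} additional cost of continuing to sense depends only on which channels remain, i.e.\ on $\Sc$.

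The key structural step uses Lemma~\ref{lemma:probe} and Corollary~\ref{cor:structure}. Under the optimal policy, had any previously sensed channel turned out to be available, the user would have stopped and transmitted immediately, ending the frame. Hence, conditioned on the event that the user is still sensing at the current slot, every previously sensed channel other than the most recent one must be ``busy''; their states are therefore deterministically known and carry no decision-relevant information. The only previously sensed channel whose state can be ``available'' is the last one, channel $i$, so recording the pair $(i, X_i)$ suffices to determine whether an immediate \emph{access} is possible and, if so, on which channel. Combining the three observations, the set of admissible actions and their optimal continuation values are completely determined by $(i, X_i, \Sc)$: if $X_i = 1$ the optimal action is to access channel $i$, whereas if $X_i = 0$ the choice among \emph{sense}, \emph{guess}, and \emph{quit}, as well as the channel selected for guessing (the best channel in $\Sc$, by Corollary~\ref{cor:structure}), depends only on $\Sc$. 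The boundary convention $i = X_i = 0$ when $\Sc = [K]$ simply encodes the initial slot in which no channel has yet been sensed.

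I expect the main obstacle to be the careful formalization of the sunk-cost and no-extra-information arguments in the second and third steps: one must rule out that the joint realization of past costs, rewards, and the (busy) states of earlier channels could, through the i.i.d.\ and independence structure, leak any information about the unsensed channels. This is precisely where Assumptions~\ref{assump:dist}.1, \ref{assump:dist}.3, and \ref{assump:dist}.4 must be used in concert, together with the fact from Lemma~\ref{lemma:probe} that the sensing phase survives to the current slot only when all earlier observations were ``busy.''
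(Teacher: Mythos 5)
Your proposal is correct and follows essentially the same route as the paper's own proof: the decisive step in both is to use Lemma~\ref{lemma:probe} and Corollary~\ref{cor:structure} to conclude that, conditioned on still sensing, every previously sensed channel other than the last must be busy, so that the identities, order, and realized costs of past probes are decision-irrelevant and $(i,X_i,\Sc)$ suffices. Your write-up merely makes explicit the independence (Assumption~\ref{assump:dist}.1) and sunk-cost arguments that the paper leaves implicit, which is a matter of detail rather than a different approach.
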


\begin{proof}
	When no channel has been examined yet, the information state can be represented as $(0,0,[K])$ without any ambiguity. When some channels have been sensed, state $(i, X_i,\Sc)$ implies that $X_j=0$, $\forall j\notin \Sc\cup\{i\}$. According to Corollary~\ref{cor:structure}, the probing order of those channels will not affect the reward/cost of any future actions, thus is redundant. Therefore, $(i,X_i,\Sc)$ contains sufficient information for future decision-making.
\end{proof}

Based on Corollary~\ref{cor:state}, in the following, we will use dynamic programming (DP) \cite{Kumar:1986:SSE} to represent the optimal decision process. Let $V (i, X_i,\Sc)$ denote the value function, i.e. maximum expected {\it remaining} net reward given the system state is $(i, X_{i},\Sc)$. Then,
\begin{align}\label{eqn:bellman}
& V(i, X_{i},\Sc)=\max\big\{  X_i b_0 \!- \! p_0, \,\max_{j\in\Sc} \{\!-\! c_0 \!+\!\Eb[V(j, X_j,\Sc\backslash j)]\},\nonumber\\
&\qquad\qquad\qquad\qquad \max_{j\in\Sc} \{\theta_jb_0- p_0\}, \,0 \big\} 
\end{align}
where the expectation is taken with respect to $X_j$.
The first term on the right hand side of (\ref{eqn:bellman}) represents the expected net reward if the user decides to {\it access} the last sensed channel, the second term represents the maximum expected net reward of probing a channel in $\Sc$, the third term represents the maximum expected net reward of {\it guess}, i.e., transmitting over the best unsampled channel, and the last term represents the net reward of {\it quitting} the current frame. Thus, given the information state $(i, X_i,\Sc)$, the user needs to choose the action to maximize the expected remaining net reward, while $V(0, 0,[K])$ gives the expected total net reward in a frame under the optimal offline policy. This is a standard finite-horizon DP, which can be solve through backward induction. Roughly speaking, without any structural information of the optimal solution, the state space of this DP is $O(K\times K!)$, which quickly becomes prohibitive as $K$ increases.
Therefore, in the following, we will first identify the structural properties of the optimal policy through theoretical analysis, and then leverage those properties to obtain the optimal offline policy.

We observe that the optimal policy has the following structure.
\begin{Lemma}\label{lemma:ranking}
	Denote $\Pc$ as the subset of channels to sense under the optimal policy. Then, the transmitter should sense the channels sequentially according to a descending order of $\theta_i$, until it finds the first available channel. Moreover, $\min_{i\in \Pc}\theta_i\geq  \max_{i\in [K]\backslash \Pc}\theta_i$, i.e., the channels in $\Pc$ are better than any other channel outside $\Pc$.
\end{Lemma}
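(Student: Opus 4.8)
The plan is to prove both assertions by \emph{interchange (exchange) arguments} built on the recursive value function~(\ref{eqn:bellman}), using the fact from Lemma~\ref{lemma:probe} that a sensed-and-idle channel yields the fixed net reward $b_0-p_0$ and ends the frame. A preliminary observation I would record is that once a prefix of channels has been sensed and all found busy, the state is of the form $(i,0,\Sc)$; since transmitting over a busy channel gives $-p_0<0$, the continuation value depends only on the \emph{set} $\Sc$ of still-unsensed channels, not on the order in which the busy ones were examined. This ``order-free tail'' is what makes the local swaps below admissible.

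For the descending-order claim I would fix the sensed set $\Pc$ and compare two policies differing only by interchanging two \emph{adjacent} sensing actions on channels $j,k\in\Pc$ with $\theta_j\ge\theta_k$. Writing the two-step contribution as ``sense the first; if busy sense the second; if still busy continue with the common tail value $R$'', the $R$-terms carry the identical reaching-probability $(1-\theta_j)(1-\theta_k)$ in both orders and cancel, and expanding the reward terms collapses them so that the value difference equals exactly $(\theta_j-\theta_k)\,c_0\ge 0$. Hence sensing the higher-$\theta$ channel first is always weakly preferable, and composing adjacent swaps (a bubble-sort argument) shows the full descending order is optimal.

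For the top-set claim $\min_{i\in\Pc}\theta_i\ge\max_{i\in[K]\setminus\Pc}\theta_i$ I would argue by interchange as well. Using the ordering just proved, let $a=\arg\min_{i\in\Pc}\theta_i$ be the channel sensed last, and suppose toward a contradiction that some $b\notin\Pc$ has $\theta_b>\theta_a$; take $b$ to be the best unsensed channel, so that after $a$ is found busy the optimal continuation is to guess $b$ or quit. I would build $\pi'$ agreeing with $\pi$ through the sensing of $\Pc\setminus\{a\}$ but sensing $b$ in the last slot and then guessing the best remaining channel (whose parameter is $\ge\theta_a$, since $a$ itself is now unsensed) or quitting. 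Sharing a common prefix of identical reaching-probability, it suffices to compare the tails ``sense $a$, then guess $b$'' versus ``sense $b$, then guess $a$'', i.e.\ $-c_0+\theta_a(b_0-p_0)+(1-\theta_a)\max\{\theta_b b_0-p_0,\,0\}$ against its $a\!\leftrightarrow\! b$ swap. The two are \emph{equal} when both guesses are profitable, their difference is $(1-\theta_b)(p_0-\theta_a b_0)\ge0$ when only the $b$-guess is profitable, and $(\theta_b-\theta_a)(b_0-p_0)\ge0$ when neither is; the remaining case is impossible because $\theta_b>\theta_a$ forces $\theta_b b_0>\theta_a b_0$. Since $\pi'$ actually guesses a channel at least as good as $a$, its tail value dominates ``sense $b$, guess $a$'', giving $J^{\pi'}\ge J^{\pi}$. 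Thus any optimal policy can be transformed, without loss of value, into one whose sensed set is a top-set, which is the stated property.

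The main obstacle I anticipate is the second claim rather than the first: the ordering swap is a one-line cancellation, whereas the top-set swap requires isolating the ``last sense plus guess'' tail, tracking how the guess fallback (the best unsensed channel) changes when $a$ and $b$ trade places, and verifying nonnegativity across every profitability regime of the guess. Setting up the reduction to the two-action tail — and arguing that re-optimizing that tail can only help $\pi'$ — is the delicate part; the per-case algebra is routine once the reduction is in place.
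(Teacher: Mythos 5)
Your proposal is correct and follows essentially the same route as the paper: both halves are proved by exchange arguments, first swapping two adjacently sensed channels (yielding exactly the $(\theta_j-\theta_k)c_0$ cost difference, as in the paper), then swapping the worst sensed channel with the best unsensed one. The only cosmetic difference is that in the second swap you compute expected tail values with a case split on guess profitability (and let $\pi'$ re-optimize its guess target), whereas the paper couples realizations with $X_i\neq X_j$ and splits on whether the original policy guesses or quits; these are equivalent computations.
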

The proof of Lemma~\ref{lemma:ranking} is given in Appendix~\ref{appx:lemma:ranking}.
Lemma~\ref{lemma:ranking} indicates an efficient way to identify the optimal policy, as summarized in Theorem~\ref{thm:offline}.

\begin{Theorem}\label{thm:offline}
	The optimal offline policy is to sequentially examine the channels starting from channel 1. At time slot $n$, there exist two thresholds $l_{n}$, $u_{n}$, $0\leq l_{n}\leq u_{n}\leq 1$, such that:
	\begin{itemize}
		\item if $\theta_{n}\in [u_{n},1]$, $\alpha(n)=0$, $\beta=n$, i.e., transmit over channel $n$ without probing.
		\item  if $\theta_{n}\in [l_{n}, u_{n})$, $\alpha(n)=n$, i.e., sense channel $n$. The user will transmit over channel $n$ if $X_{n}=1$, and it will move on to the next channel if $X_{n}=0$.
		\item if $\theta_{n}\in [0, l_{n})$, $\alpha(n)=\beta=0$, i.e., stop sensing and quit transmission in the current frame.
	\end{itemize}
	The values of the thresholds $u_n$ and $l_n$ can be determined recursively by solving (\ref{eqn:bellman}) through backward induction.
\end{Theorem}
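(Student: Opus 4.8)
The plan is to combine Lemma~\ref{lemma:ranking} with a one-dimensional backward recursion and then analyze the Bellman update as the pointwise maximum of three \emph{affine} functions of $\theta_n$. By Lemma~\ref{lemma:ranking} the channels are sensed in the fixed order $1,2,\dots$, so reaching the decision for channel $n$ means channels $1,\dots,n-1$ have all been sensed busy and the set of unsensed channels is exactly $\{n,n+1,\dots,K\}$, whose best member is channel $n$. Hence the value function (\ref{eqn:bellman}) collapses to a scalar sequence $G_n := V(n-1,0,\{n,\dots,K\})$ satisfying
\[ G_n=\max\{\,\theta_n b_0 - p_0,\; -c_0+\theta_n(b_0-p_0)+(1-\theta_n)G_{n+1},\; 0\,\}, \]
with $G_{K+1}=0$, where by Lemma~\ref{lemma:probe} an available sensed channel is transmitted over immediately. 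The three arguments are the \emph{guess}, \emph{sense}, and \emph{quit} values; I would write them as affine functions of $\theta:=\theta_n$, namely $g(\theta)=b_0\theta-p_0$, $s(\theta)=(b_0-p_0-G_{n+1})\theta+(G_{n+1}-c_0)$, and $q(\theta)\equiv0$.

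The crux is to establish, by backward induction on $n$, the bound $0\le G_{n+1}\le b_0-p_0$: the lower bound is immediate because \emph{quit} is always available, and the upper bound follows since each of the three options is at most $b_0-p_0$ whenever $G_{n+1}$ is. This bound is exactly what orders the slopes of the three lines as $0=\mathrm{slope}(q)\le \mathrm{slope}(s)=b_0-p_0-G_{n+1}\le \mathrm{slope}(g)=b_0$. Since the optimal stage-$n$ value $\max\{g,s,q\}$ is a convex, piecewise-linear function of $\theta$ whose active pieces therefore have nondecreasing slopes as $\theta$ increases, \emph{quit} (flattest) can only be optimal on a low interval, \emph{guess} (steepest) only on a high interval, and \emph{sense} only in between. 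The endpoints confirm this: at $\theta=1$ one checks $g(1)-s(1)=c_0\ge0$ and $g(1)=b_0-p_0>0$, so guessing is optimal near $\theta=1$; and when $G_{n+1}<c_0$ one has $s(0)=G_{n+1}-c_0<0$ and $g(0)=-p_0\le0$, so quitting is optimal near $\theta=0$.

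It then remains to read off the thresholds as the pairwise intersections, $u_n=1-c_0/(p_0+G_{n+1})$ (from $g=s$) and $l_n=(c_0-G_{n+1})/(b_0-p_0-G_{n+1})$ (from $s=q$), clipped to $[0,1]$; the slope ordering guarantees $l_n\le u_n$, and degenerate cases---an empty \emph{quit} region when $G_{n+1}\ge c_0$, or an empty \emph{sense} region when sensing is dominated, in which case $l_n=u_n$ reduces to the \emph{quit}/\emph{guess} crossover $p_0/b_0$---are absorbed by the convex-envelope description. Because each $G_{n+1}$ is produced by the scalar backward recursion starting from $G_{K+1}=0$, the thresholds $\{(l_n,u_n)\}$ are obtained recursively, as claimed. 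I expect the main obstacle to be the inductive upper bound $G_{n+1}\le b_0-p_0$, since it is precisely the slope ordering this inequality yields that rules out any non-monotone interleaving of the three actions and thereby forces the clean double-threshold structure; the remaining work is careful bookkeeping of the boundary cases to keep $0\le l_n\le u_n\le 1$.
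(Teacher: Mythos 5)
Your proposal is correct and follows essentially the same route as the paper's proof: after invoking Lemma~\ref{lemma:ranking} to fix the sensing order, the paper reduces (\ref{eqn:bellman}) to exactly your scalar recursion (its $E_{n-1}$ is your $G_n$, with $E_K=0$), analyzes the maximum of the same affine functions of $\theta_n$, and reads off $u_n$ and $l_n$ from the same pairwise intersections, using $\max/\min$ with $p_0/b_0$ to absorb the degenerate cases you describe. The one step you make explicit---the inductive bound $0\le G_{n+1}\le b_0-p_0$ that orders the slopes---is precisely the content of the paper's Lemma~\ref{lemma:E_i}, so there is no substantive difference in approach.
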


The proof of Theorem~\ref{thm:offline} is presented in Appendix~\ref{appx:thm1}. 
Compared with the DP formulation in (\ref{eqn:bellman}), the state space of this offline policy now is reduced to $O(K)$. Thus, the optimal policy can be identified in a more computationally-efficient manner. 

{\bf Remark:} We point out that the optimal offline policy only depends on the mean of the costs and reward, thus, it can be directly applied to the scenario where the costs and reward are constants. Besides, it can also be applied to the case where the costs and reward are random but the instantaneous realizations are known beforehand. For this case, we can simply treat the costs and reward in each frame as constants and obtain the optimal policy in each frame individually. It can also be extended to handle the case where the costs and reward vary for different channels. 

\begin{Corollary}\label{prop:thresholds}
	$\{u_{i}\}_{i=1}^K$ is a monotonically decreasing sequence, and $\{l_{i}\}_{i=1}^K$ is a monotonically increasing sequence.
\end{Corollary}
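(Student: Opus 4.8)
The plan is to collapse the full dynamic program in (\ref{eqn:bellman}) to a scalar recursion along the optimal sensing path, read off closed forms for the two thresholds, and then reduce both monotonicity claims to a single monotonicity property of the value function. By Lemma~\ref{lemma:ranking} the channels are probed in the order $1,2,\ldots$, and by Corollary~\ref{cor:state} the state after slots $1,\ldots,n-1$ (each sensed busy) is completely summarized by the index of the next channel. Hence I would define $V_n$ to be the optimal expected remaining net reward when the unsensed set is $\{n,n+1,\ldots,K\}$ and the last sensed channel was busy. Invoking Lemma~\ref{lemma:probe} (transmit immediately upon finding an idle channel) and Corollary~\ref{cor:structure} (a guess always uses the best unsensed channel, here channel $n$), the Bellman equation reduces to
\[
V_n=\max\{\theta_n b_0-p_0,\; -c_0+\theta_n(b_0-p_0)+(1-\theta_n)V_{n+1},\; 0\},
\]
with terminal value $V_{K+1}=0$.

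Next I would extract the thresholds. The three quantities inside the max are affine in $\theta_n$: the guess value $\theta_n b_0-p_0$ has slope $b_0$, the sense value $S(\theta_n):=-c_0+\theta_n(b_0-p_0)+(1-\theta_n)V_{n+1}$ has slope $(b_0-p_0)-V_{n+1}$, and quitting is the constant $0$. The upper threshold $u_n$ is the guess/sense crossing and the lower threshold $l_n$ is the sense/quit crossing; solving the two linear equations gives the closed forms $u_n=1-\frac{c_0}{p_0+V_{n+1}}$ and $l_n=1-\frac{(b_0-p_0)-c_0}{(b_0-p_0)-V_{n+1}}$. Both thresholds are monotone functions of the single quantity $V_{n+1}$: $u_n$ is increasing in $V_{n+1}$, and (since $b_0-p_0-c_0\geq 0$ in the regime where sensing is ever used) $l_n$ is decreasing in $V_{n+1}$.

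The crux is therefore to show that $V_n$ is non-increasing in $n$, i.e.\ $V_n\geq V_{n+1}$. I would prove this by backward induction, coupled with the bound $0\leq V_n\leq b_0-p_0$. The bound is needed because the sense value $S$ is monotone in $\theta$ only when its slope $(b_0-p_0)-V_{n+1}$ is nonnegative, i.e.\ exactly when $V_{n+1}\leq b_0-p_0$; this coupling of the two induction hypotheses is the main obstacle and the reason the bound cannot be treated as an afterthought. In the inductive step, writing $f(\theta,v):=-c_0+\theta(b_0-p_0)+(1-\theta)v$, I would use $\theta_n\geq\theta_{n+1}$ together with $V_{n+1}\geq V_{n+2}$ to dominate each term of the max defining $V_{n+1}$ by the corresponding term of the max defining $V_n$: monotonicity of $f$ in its second argument gives $f(\theta_n,V_{n+1})\geq f(\theta_n,V_{n+2})$, and monotonicity in $\theta$ (valid since $V_{n+2}\leq b_0-p_0$) gives $f(\theta_n,V_{n+2})\geq f(\theta_{n+1},V_{n+2})$, while $\theta_n b_0-p_0\geq\theta_{n+1}b_0-p_0$ handles the guess term.

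Finally, combining the two ingredients yields the claim. Since $\{V_{n+1}\}_{n=1}^{K}$ is non-increasing in $n$, the quantity $p_0+V_{n+1}$ decreases, so $u_n=1-c_0/(p_0+V_{n+1})$ decreases; and $(b_0-p_0)-V_{n+1}$ increases, so $l_n=1-\frac{(b_0-p_0)-c_0}{(b_0-p_0)-V_{n+1}}$ increases. (When $c_0> b_0-p_0$ sensing is never optimal and the sense region is empty, so the two thresholds collapse and the statement holds trivially.) This establishes that $\{u_i\}$ is monotonically decreasing and $\{l_i\}$ is monotonically increasing.
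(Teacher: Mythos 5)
Your argument follows the same route as the paper's own proof: your $V_n$ is exactly the paper's $E_{n-1}$, your scalar recursion is exactly (\ref{eqn:recursive}), your backward induction establishing $b_0-p_0\geq V_n\geq V_{n+1}\geq 0$ is precisely Lemma~\ref{lemma:E_i} (including the correct observation that the upper bound $V_{n+2}\leq b_0-p_0$ is what makes the sense term monotone in $\theta$, so the two induction hypotheses must be carried together), and the last step reads off threshold monotonicity from the thresholds' dependence on the value function. That induction is sound, and it is the heart of the paper's argument.

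The genuine gap is in your claimed closed forms for the thresholds. The paper's thresholds in (\ref{eqn:a_i})--(\ref{eqn:b_i}) are \emph{clamped} at the guess/quit crossing: $u_n=\max\left\{\frac{p_0}{b_0},\,1-\frac{c_0}{p_0+E_n}\right\}$ and $l_n=\min\left\{\frac{p_0}{b_0},\,1-\frac{b_0-p_0-c_0}{b_0-p_0-E_n}\right\}$. You drop the clamping and assert that the pure guess/sense and sense/quit crossings \emph{are} the thresholds. That assertion fails whenever the sense line is dominated by $\max\{\text{guess},\text{quit}\}$ for a particular channel, which happens exactly when $(p_0+V_{n+1})(b_0-p_0)\leq c_0 b_0$ --- and this can occur for late channels (small $V_{n+1}$) even when $c_0\leq b_0-p_0$, so your final parenthetical, which only treats the global degenerate case $c_0>b_0-p_0$, does not cover it. Concretely, take $b_0=1$, $p_0=0.1$, $c_0=0.2$ (so $c_0\leq b_0-p_0$) and channel $K$ with $V_{K+1}=0$: your formulas give $u_K=1-c_0/p_0=-1$ and $l_K=2/9$, so $u_K<l_K$ and $u_K<0$, violating the ordering $0\leq l_K\leq u_K$ required in Theorem~\ref{thm:offline}; the true thresholds there are $u_K=l_K=p_0/b_0=0.1$, and your quantities delimit no decision region. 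The repair is one line: the actual thresholds are the max (resp.\ min) of your crossing formulas with the constant $p_0/b_0$, and taking a max or min with a constant preserves the monotonicity you established, so the corollary follows for the clamped thresholds as well. With that amendment your proof is complete and coincides with the paper's.
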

Corollary~\ref{prop:thresholds} is implied by Lemma~\ref{lemma:E_i} in Appendix~\ref{appx:online} and the expressions of $l_i$ and $u_i$ in (\ref{eqn:a_i})(\ref{eqn:b_i}) in Appendix~\ref{appx:thm1}. 
The monotonicity of the thresholds indicates that $[l_K,u_K]\subseteq[l_{K-1},u_{K-1}]\ldots\subseteq [l_1,u_1]$, i.e., channels with higher $\theta_i$s are more prone to {\it sense}, while channels with lower $\theta_i$ are more prone to {\it guess} or {\it quit}. This is because the potential reward gain by sensing bad channel is small compared with the cost for sensing. 

The properties of the thresholds in Corollary~\ref{prop:thresholds} also coincide with the two optimal structures specified in Corollary~\ref{cor:structure}. 

\begin{Corollary}\label{prop:maxN}
	As $c_0$ increases, the maximum number of channels to sense in each frame decreases. 
\end{Corollary}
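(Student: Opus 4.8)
The plan is to reduce the statement to a one-dimensional monotonicity argument along the ``all-busy'' sample path, which is precisely the path that attains the maximum sensing depth. First I would specialize the Bellman recursion (\ref{eqn:bellman}) to the states reached after channels $1,\ldots,n-1$ have all been sensed busy. By Corollary~\ref{cor:state} and Lemma~\ref{lemma:ranking}, the relevant state before deciding on channel $n$ is $(n-1,0,\{n,\ldots,K\})$, and the best channel to either probe or guess is channel $n$ itself. Writing $R_n(c_0):=V(n-1,0,\{n,\ldots,K\})$ and using Lemma~\ref{lemma:probe} to evaluate the event $X_n=1$, the recursion collapses to
\begin{align}
R_n(c_0)=\max\big\{&-c_0+\theta_n(b_0-p_0)+(1-\theta_n)R_{n+1}(c_0),\nonumber\\
&\ \theta_n b_0-p_0,\ 0\big\},
\end{align}
with boundary $R_{K+1}(c_0)=0$ (the ``access'' option evaluates to $X_{n-1}b_0-p_0=-p_0$, which is dominated by ``quit'' and is dropped). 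The maximum number of channels to sense is then $M(c_0)=m-1$, where $m$ is the smallest index at which the first (``sense'') term fails to be a maximizer; since ``guess'' and ``quit'' are terminal, the sensed channels always form the prefix $\{1,\ldots,M(c_0)\}$, and the all-busy path attains this depth.

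Next I would prove that $R_n(c_0)$ is non-increasing in $c_0$ for every $n$, by backward induction. The base case $R_{K+1}\equiv 0$ is constant. For the inductive step, the ``sense'' term is non-increasing in $c_0$, being the sum of the explicitly decreasing $-c_0$ and $(1-\theta_n)\ge 0$ times the non-increasing $R_{n+1}(c_0)$, while the ``guess'' and ``quit'' terms do not depend on $c_0$; a pointwise maximum of non-increasing functions is non-increasing, so $R_n(c_0)$ is non-increasing.

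With monotonicity in hand, define the sensing advantage $D_n(c_0):=\big[-c_0+\theta_n(b_0-p_0)+(1-\theta_n)R_{n+1}(c_0)\big]-\max\{\theta_n b_0-p_0,\,0\}$, so that sensing is optimal at step $n$ iff $D_n(c_0)\ge 0$. Each $D_n(c_0)$ is non-increasing in $c_0$, since its bracketed part is the ``sense'' term (non-increasing) and the subtracted quantity is constant in $c_0$. Hence if $D_n(c_0')<0$ then $D_n(c_0'')<0$ for every $c_0''>c_0'$. Because $M(c_0)$ equals one less than the first index where $D_n$ turns negative, raising $c_0$ can only move this first negative index weakly to the left, so $M(c_0'')\le M(c_0')$, which is the claimed (weak) monotonicity.

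The main obstacle is that $D_n$ is not local to step $n$: it contains the optimal continuation value $R_{n+1}(c_0)$, whose dependence on $c_0$ is itself nontrivial. The argument therefore hinges on first establishing the global monotonicity of the value function through the backward recursion before comparing the per-step crossing thresholds. One also has to dispatch the boundary ties $D_n(c_0)=0$ by a fixed convention (e.g. continue sensing when indifferent), and to note that if the process happens to stop even earlier than step $m$ at the larger cost, this only decreases $M$ further, consistent with the desired inequality.
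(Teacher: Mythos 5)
Your proof is correct and takes essentially the same approach as the paper: your $R_n(c_0)$ is exactly the paper's $E_{n-1}$ from (\ref{eqn:recursive}), and your backward-induction argument that $R_n$ is non-increasing in $c_0$ is precisely the ``monotonicity of $E_i$ in $c_0$'' that the paper's one-line proof invokes. The only difference is that you spell out the concluding step (the sensing advantage $D_n(c_0)$ being non-increasing in $c_0$, hence the first non-sensing index can only move left), which the paper leaves implicit.
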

Corollary~\ref{prop:maxN} can be proved using the monotonicity of $E_i$ in $c_0$ according to (\ref{eqn:recursive}) in Appendix~\ref{appx:thm1}. It indicates that by adjusting the cost of probing, the user is able to adaptively choose the number of channels to sense in each frame, thus achieving the optimal tradeoff between the cost and reward incurred by the sensing actions. 

\section{Online Optimization with Learning}\label{sec:online}
In this section, we assume the statistics $\{\theta_i\}$, $c_0$, $b_0$, $p_0$ are unknown beforehand. Then, our
objective is to design an online strategy to decide $\{\alpha(t,n), \beta(t), N_t\}_{t,n}$ based on up-to-date observations, so as to minimize the following regret measure:
\begin{align}\label{eqn:regret}
& R(T)= TJ^*-\sum_{t=1}^T\Eb\left[\mathbb{E}\left[B_{\beta(t)}- P_{\beta(t)}\middle |\mathcal{F}^t\right]\right]\nonumber\\
&\qquad\quad-\sum_{t=1}^T\sum_{n=1}^{N_t}\Eb\left[ \mathbb{E}\hspace{-0.02in}\left[C_{\alpha(t,n)}\middle |\mathcal{F}^{t-1}, \{ X_{\alpha(t,\tau)}(t)\}_{\tau=1}^{n-1}\right]\right ],
\end{align}
where $\Fc_t:= \{X_{\alpha(t,n)}(t), C_{\alpha(t,n)}\}_{n=1}^{N_t}\cup B_{\beta(t)}\cup P_{\beta(t)} $, i.e., the states of the sensed channels, the costs of sensing those channel, and the corresponding transmission cost and reward; $\Fc^t:=\Fc_1\cup\ldots\cup \Fc_t$; and $J^*$ is the maximum expected reward as if the statistics were known beforehand. 

If $R(T)/T\rightarrow  0$, it is called sub-linear in total regret and zero-regret in time average. Our objective is to design an online strategy to make $R(T)/T$ converges to $0$ as quickly as possible.
Intuitively, as more measurements become available, the user is able to infer the channel statistics more accurately and make more informed decisions accordingly.  As we have observed in many previous works~\cite{Lai:CMA:2011,Lai:2008:MAC,Liu:2010:DMAB,Gai:dyspan:2010,Gai:2012:CNO}, the user faces an exploration-exploitation dilemma: On one hand, the user would take more sensing and transmission actions in order to get more measurements to refine its estimation accuracy; on the other hand, the user would exploit available information to track the optimal {\it offline} policy and optimize its net reward. The user should judiciously balance those two conflicting objectives in order to minimize the regret. 

What makes the problem different and much more challenging than those existing works is the recursive structure of the optimal offline policy. As a result, {\it the error in estimating $\{\theta_i\}, c_0, b_0, p_0$ would affect the ranking of the channels, as well as the decision that the user may take over each channel}. Tracking the impact of the estimation error on the overall regret thus becomes very complicated. Moreover, due to the randomness of the channel status realizations, the user would stop sensing after observing a random number of channels, even if the user sticks to the same offline policy. Thus, if the user tries to track the optimal offline policy during exploitation, {\it the channels are observed in a random and non-uniform fashion}: the channels ranked high are observed with larger probability, while the channels ranked low may have limited chance to be observed. Therefore, the algorithm should take the sampling bias into consideration and adjust the exploration in a more sophisticated fashion.

We detail our joint learning and optimization strategy in Algorithm~\ref{alg:online}. In order to tackle the aforementioned challenges, we decouple exploration and exploitation to two separate phases. We use $T_i(t)$ to track the number of times that channel $i$ has been sampled during exploration phase up to frame $t$. In the exploration phase, the channels that have been sampled less than $D(t) = L \log t+ D$ times up to frame $t$ will be sampled, with $L, D$ being positive constant parameters. The specific values of $L$ and $D$ to ensure the optimal convergence rate of $R(T)/T$ will be discussed in Lemma~\ref{summary} in Appendix~\ref{appx:online}. In the exploitation phase, the user first estimate $\{\theta_i\}, c_0, b_0, p_0$ (denoted as $\{\hat{\theta}_i\}, \hat{c}_0, \hat{b}_0, \hat{p}_0$) by calculating the empirical average of the channel states, the sensing cost, the transmission reward and cost based on historical observations $\Fc^{t-1}$. It then executes the optimal offline policy using the estimates according to Theorem~\ref{thm:offline}. 

\begin{algorithm}[t]
	\caption{Joint Learning and Optimization.}
	\begin{algorithmic}[1]
		\State {\bf Initialization}: Sense each channel at $t=1$; Set $T_i (1) = 1$ for all $i\in[K]$; Record $\Fc_1$.
		\While {$t$}
		\State $t:=t+1$;
		\State Let $\mathcal{E}(t):=\{i:T_i(t)<D(t)\}$.
		\If {$\mathcal{E}(t)\neq \emptyset$} \Comment{{\it Exploration}}
		\State Sample every channel in $\mathcal{E}(t)$.
		\State Transmit over one available channel in $\mathcal{E}(t)$ if there is any.
		\Else  \Comment{{\it Exploitation}}
		\State Calculate empirical averages $\{\hat{\theta}_i\}$, $\hat{c}_0$, $\hat{b}_0$, $\hat{p}_0$. 
		\State  Sort channels with $\hat{\theta}_i$. 
		\State Calculate $\{\hat{u}_i\}$, $\{\hat{l}_i\}$ in Theorem~\ref{thm:offline} using $\{\hat{\theta}_i\}$, $\hat{c}_0$, $\hat{b}_0$, $\hat{p}_0$.
		\State Take actions according to Theorem~\ref{thm:offline}.
		\EndIf
		\State Update $T_i(t)$, $\Fc_t$.
		\EndWhile
	\end{algorithmic}\label{alg:online}
\end{algorithm}

The performance of the online algorithm is theoretically characterized in the following theorem.
\begin{Theorem}\label{thm2}
	There exist sufficiently large constants $L$, $D$ such that the regret under Algorithm~\ref{alg:online} is bounded by
	\begin{align}
	R(T) &\leq  C_{1} \log T +C_2,
	\end{align}
	where $C_1=KL(b_{0}+K c_{0})$, $C_2=\left(\pi^2+DK+1\right)(b_{0}+K c_{0})$.
\end{Theorem}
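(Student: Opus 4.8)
The plan is to split the regret $R(T)$ in (\ref{eqn:regret}) into the contributions of the exploration frames and of the exploitation frames and to control each separately. I would first record a uniform per-frame regret bound: the optimal per-frame net reward satisfies $0\le J^*\le b_0-p_0\le b_0$, while in any frame the expected realized net reward is at least $-Kc_0$ (at most $K$ channels are sensed, each at expected cost $c_0$, and the policy transmits only on an available channel, where the conditional expected reward-minus-cost is $b_0-p_0>0$, so the transmission term is nonnegative). Hence every frame contributes at most $b_0+Kc_0$ to $R(T)$, and it remains only to count the frames that can be ``bad.''

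For the exploration phase I would note that a frame triggers exploration only if some channel satisfies $T_i(t)<D(t)$, and in that frame every under-sampled channel is sensed once. Since $D(t)=L\log t+D$ is nondecreasing and capped by $D(T)$ over the horizon, each channel is sensed at most $D(T)$ times during exploration, so the total number of exploration samples is at most $KD(T)$; as each exploration frame contains at least one sample, the number of exploration frames is at most $KD(T)=KL\log T+KD$. Multiplying by $b_0+Kc_0$ yields the term $C_1\log T$ with $C_1=KL(b_0+Kc_0)$, together with the $DK(b_0+Kc_0)$ summand of $C_2$.

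The heart of the argument is the exploitation phase, where regret accrues only when the empirical policy built from $\{\hat\theta_i\},\hat c_0,\hat b_0,\hat p_0$ disagrees with the true optimal policy of Theorem~\ref{thm:offline}. The crucial structural fact I would establish is a \emph{robustness margin}: because Assumption~\ref{assump:dist} imposes strict gaps (in particular $b_0-p_0>0$ and a nondegenerate ordering of the $\theta_i$), and because the double thresholds $l_n,u_n$ are Lipschitz-continuous functions of $(\{\theta_i\},c_0,b_0,p_0)$ through the backward recursion behind (\ref{eqn:recursive}), there is a constant $\epsilon^*>0$ such that whenever all estimates lie within $\epsilon^*$ of their true values, each true $\theta_n$ falls in the same decision interval among $[u_n,1]$, $[l_n,u_n)$, $[0,l_n)$ and the induced ranking agrees with the optimal one. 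Consequently the estimated policy equals $\pi^*$ and that exploitation frame contributes zero regret.

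It then suffices to bound the probability that some estimate is off by more than $\epsilon^*$ in an exploitation frame $t$. Since exploitation requires $T_i(t)\ge D(t)=L\log t+D$ for every channel, Hoeffding's inequality gives a failure probability $\le 2\exp(-2(\epsilon^*)^2 D(t))\le 2t^{-2L(\epsilon^*)^2}$ for each of the $O(K)$ parameters, and a union bound keeps the per-frame failure probability of the form $c\,t^{-2L(\epsilon^*)^2}$. Choosing $L$ large enough that $2L(\epsilon^*)^2\ge 2$ makes $\sum_{t\ge 1}t^{-2L(\epsilon^*)^2}$ converge to a constant controlled by $\pi^2/6$, so the exploitation regret is at most a constant multiple of $(b_0+Kc_0)$ contributing the $(\pi^2+1)(b_0+Kc_0)$ portion of $C_2=(\pi^2+DK+1)(b_0+Kc_0)$, where the extra $+1$ absorbs the $t=1$ initialization frame. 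Summing the two phases gives the claimed bound. I expect the main obstacle to be the robustness-margin step: quantifying $\epsilon^*$ requires showing that the recursively defined thresholds are jointly Lipschitz in all parameters and that the estimation error is not amplified uncontrollably through the backward induction, so that one uniform accuracy level simultaneously guarantees the correct ranking \emph{and} the correct sense/guess/quit decision at every level. The decoupling of exploration and exploitation is precisely what tames the non-uniform sampling bias flagged earlier, since forced exploration secures the minimum count $D(t)$ for \emph{every} channel regardless of how the exploitation policy under-samples low-ranked channels.
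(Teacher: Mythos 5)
Your overall architecture matches the paper's: decouple exploration and exploitation, bound the per-frame regret by $b_0+Kc_0$, count exploration frames by $K(D(T)+1)\approx KD(T)$, and show via a margin-plus-Hoeffding argument that each exploitation frame is suboptimal with probability $O(1/t^2)$, summing to a constant. The Lipschitz/robustness-margin step you flag as the main obstacle is exactly the paper's Lemma~\ref{lemma:estimate_error_bound} (the error through the backward recursion is amplified by a factor $e_i$ that grows like $2^{K-i}$, but this is still a constant in $t$, so the margin argument goes through). However, there is a genuine gap in your Hoeffding step: you claim that ``exploitation requires $T_i(t)\ge D(t)$ for every channel'' and apply Hoeffding with $D(t)$ samples to \emph{each} of the parameters. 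This is valid for the $\hat{\theta}_i$ (forced exploration guarantees $T_i(t)\geq D(t)$ state observations) and for $\hat{c}_0$ (every sensing action yields a cost sample, so there are at least $KD(t)$ of them), but it is false for $\hat{b}_0$ and $\hat{p}_0$: reward and transmission-cost samples are observed only when a transmission actually occurs, and in an exploration frame Algorithm~\ref{alg:online} transmits only if some channel in $\mathcal{E}(t)$ happens to be idle. The number of such observations is therefore a \emph{random} quantity with no deterministic lower bound of order $D(t)$; conceivably all explored channels are busy in many frames. The paper closes this hole with a dedicated concentration result (Lemma~\ref{B(t)}): the number of exploration transmissions $T_p(t)$ exceeds $\frac{\theta_K}{2}D(t)$ except with probability $\exp\bigl(-\theta_K^2(D(t)-1)/(2K)\bigr)$, and the union bound in Lemma~\ref{summary} then conditions on this event, which is why the per-frame failure probability is $5/t^2$ (not $4/t^2$) and why the conditions (\ref{eqn:L})--(\ref{eqn:D}) on $L$ and $D$ carry factors of $\theta_K$ and $\theta_K^2$ that your choice ``$2L(\epsilon^*)^2\ge 2$'' cannot produce. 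Without this step your bound on the probability that $|\hat{b}_0-b_0|$ or $|\hat{p}_0-p_0|$ exceeds the margin simply does not follow.

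Two smaller points. First, your justification of the per-frame bound $b_0+Kc_0$ asserts that the transmission term is nonnegative because the user ``transmits only on an available channel''; this ignores the \emph{guess} action, under which the expected transmission net reward is $\theta_i b_0 - p_0$ and can be as low as $-p_0$ (and a learning policy with bad estimates can indeed guess badly). The stated constant survives because $J^*\leq b_0-p_0$ and the realized per-frame reward is at least $-p_0-Kc_0$, which is the paper's actual argument. Second, your margin $\epsilon^*>0$ need not exist when some $\theta_i$ coincides exactly with a threshold $u_i$ or $l_i$ (nothing in Assumptions~\ref{assump:dist} rules this out); the paper handles this degenerate case separately in Lemma~\ref{summary} by observing that the two actions on either side of the tie yield identical expected reward, so misclassifying such a channel incurs no regret. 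Your proof needs the same caveat for the margin-based case analysis to be exhaustive.
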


Theorem~\ref{thm2} indicates that the cumulative regret $R(T)$ is sub-linear in $T$, thus achieving zero-regret asymptotically. The first term of the regret is due to the exploration while the last term comes from exploitation. The proof is sketched as follows: we first relate the error in estimating the thresholds $\{u_{i}\}, \{l_{i}\}$ with the estimation errors $\{\theta_{i}-\hat{\theta}_i\}$, $c_0-\hat{c}_0$, $b_0-\hat{b}_0$, $p_0-\hat{p}_0$. Based on this relationship, we derive an upper bound on the number of samples required to ensure all estimation errors are sufficiently small, so that the user will choose the optimal offline policy in the exploitation phase with high probability. Finally, we explicitly bound the regret by examining the regrets from exploration and exploitation separately. A detailed proof of Theorem~\ref{thm2} is presented in Appendix~\ref{appx:online}. 

We also establish a lower bound on the regret under any $\alpha$-consistent~\cite{Lai:1985:AEA} online strategy in the following. 
\begin{Definition}[$\alpha$-consistent strategy]\label{dfn:alpha_consistent} Let $N^*(T)$ be the number of times that the user takes the optimal offline policy in a frame over the first $T$ frames under an online strategy. Then, $\forall \alpha \in (0,1)$, if $$\frac{\Eb[T - N^*(T)]}{T^{\alpha}} = o(1),$$ the strategy is $\alpha$-consistent.
\end{Definition}
Roughly speaking, an $\alpha$-consistent strategy represents a category of ``good" online strategies under which the user selects the optimal offline policy in each frame with high probability. 

\begin{figure*}
	\centering
	\subfigure[Cumulative regret ]{ \includegraphics[width=0.4\textwidth, scale=1]{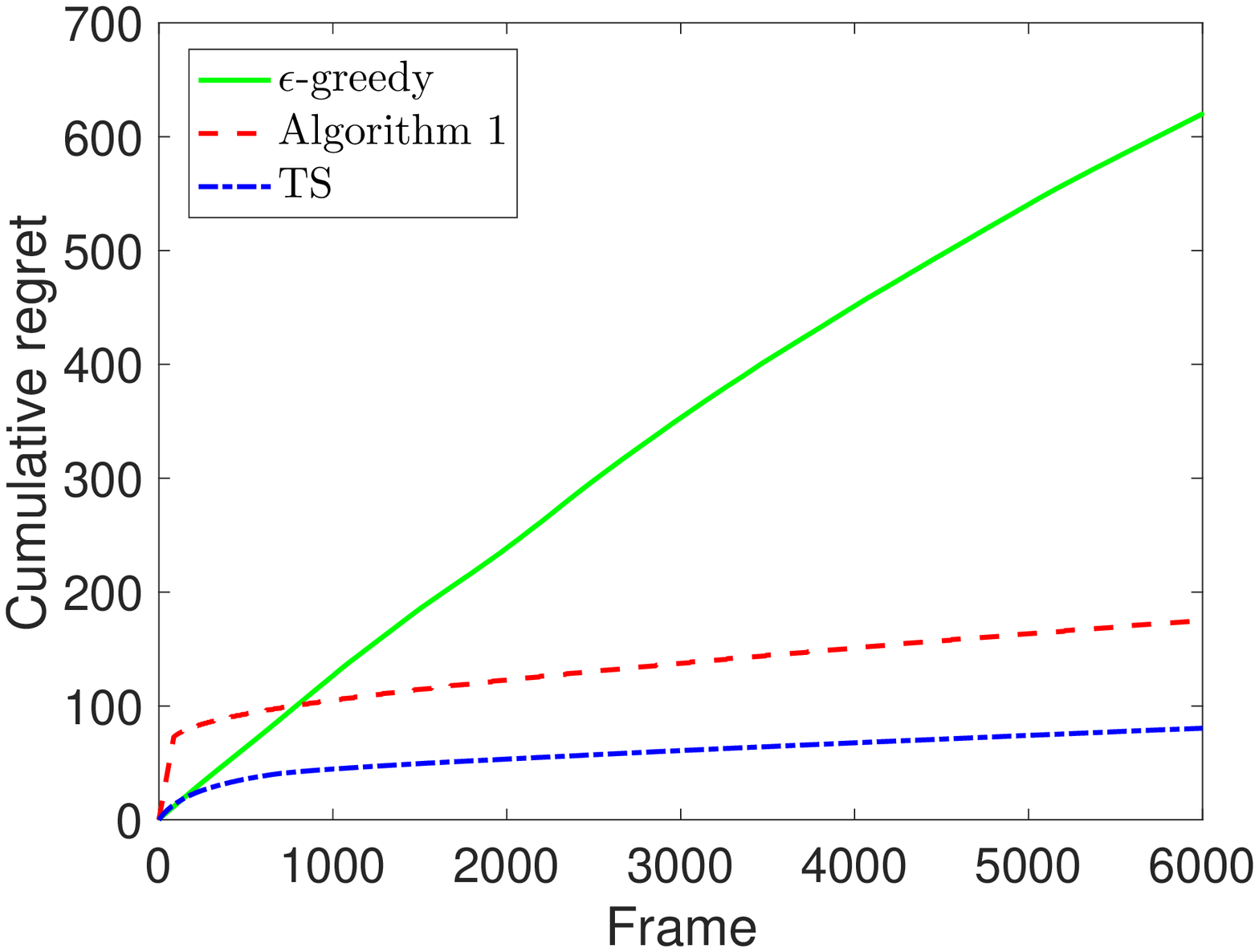}}\label{fig:2a} 
	\subfigure[Average regret]{ \includegraphics[width=0.4\textwidth, scale=1]{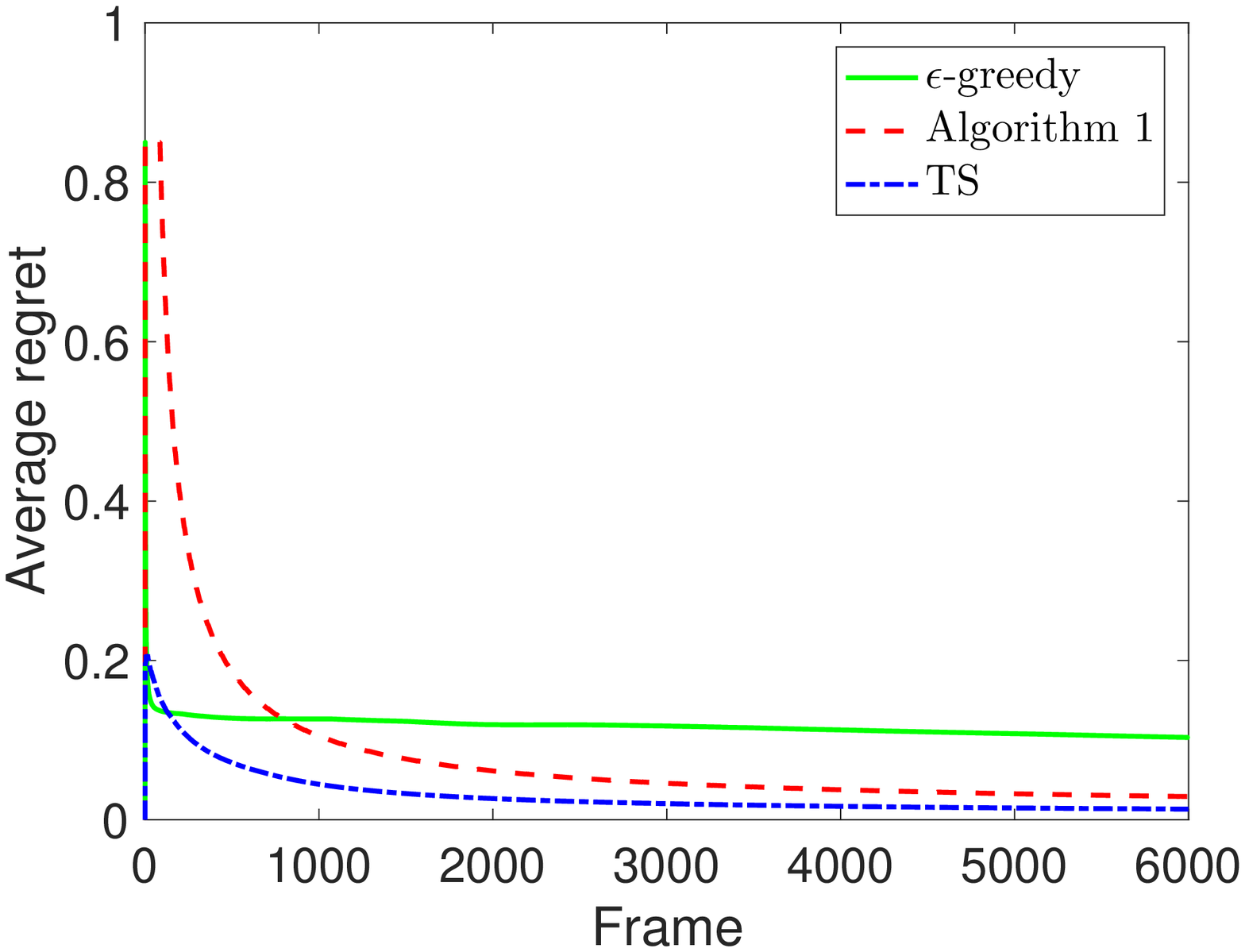}}\label{fig:2b}
	\subfigure[Per-frame net reward]{ \includegraphics[width=0.4\textwidth, scale=1]{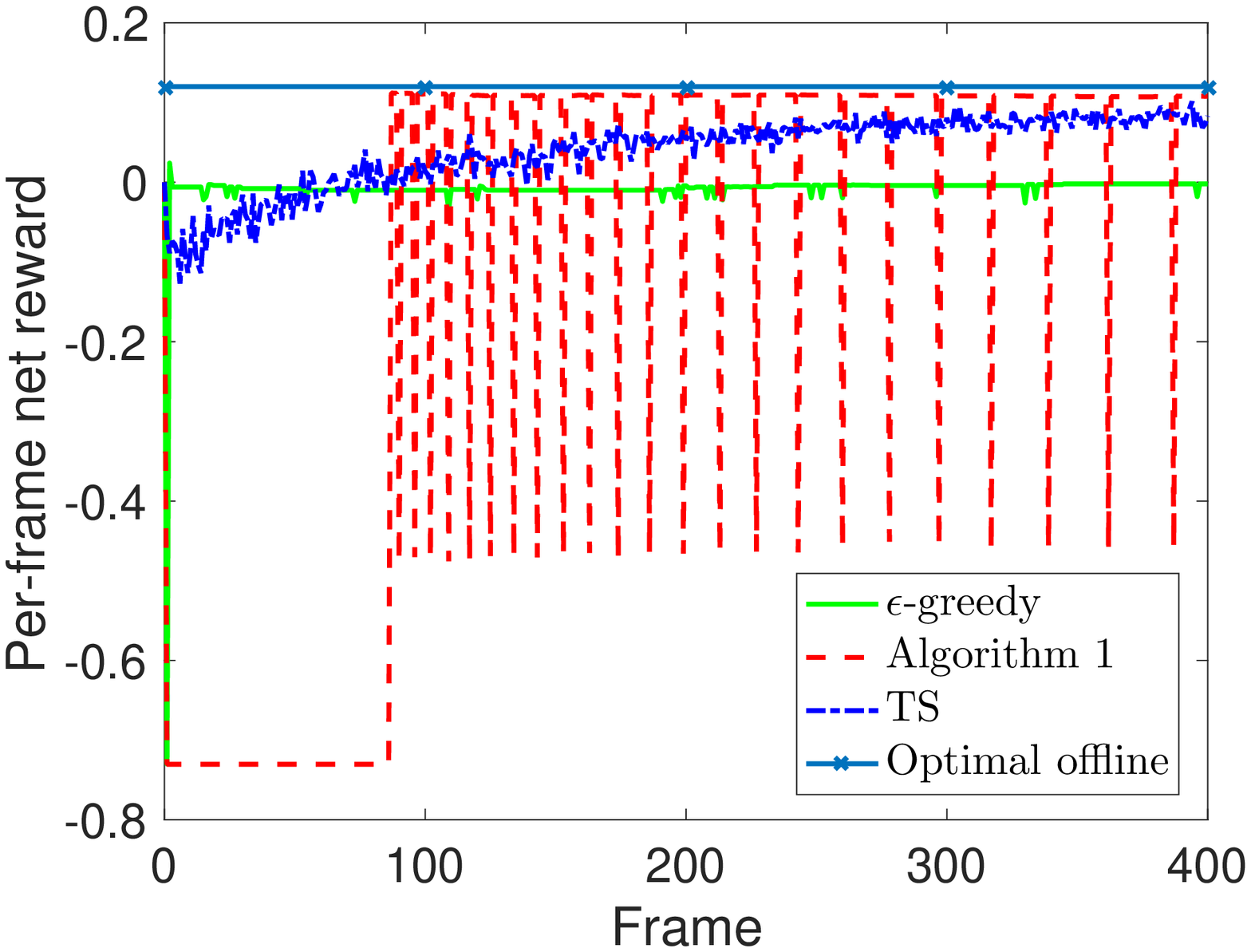}}\label{fig:3}
	\subfigure[Regret with different $L$]{ \includegraphics[width=0.4\textwidth, scale=1]{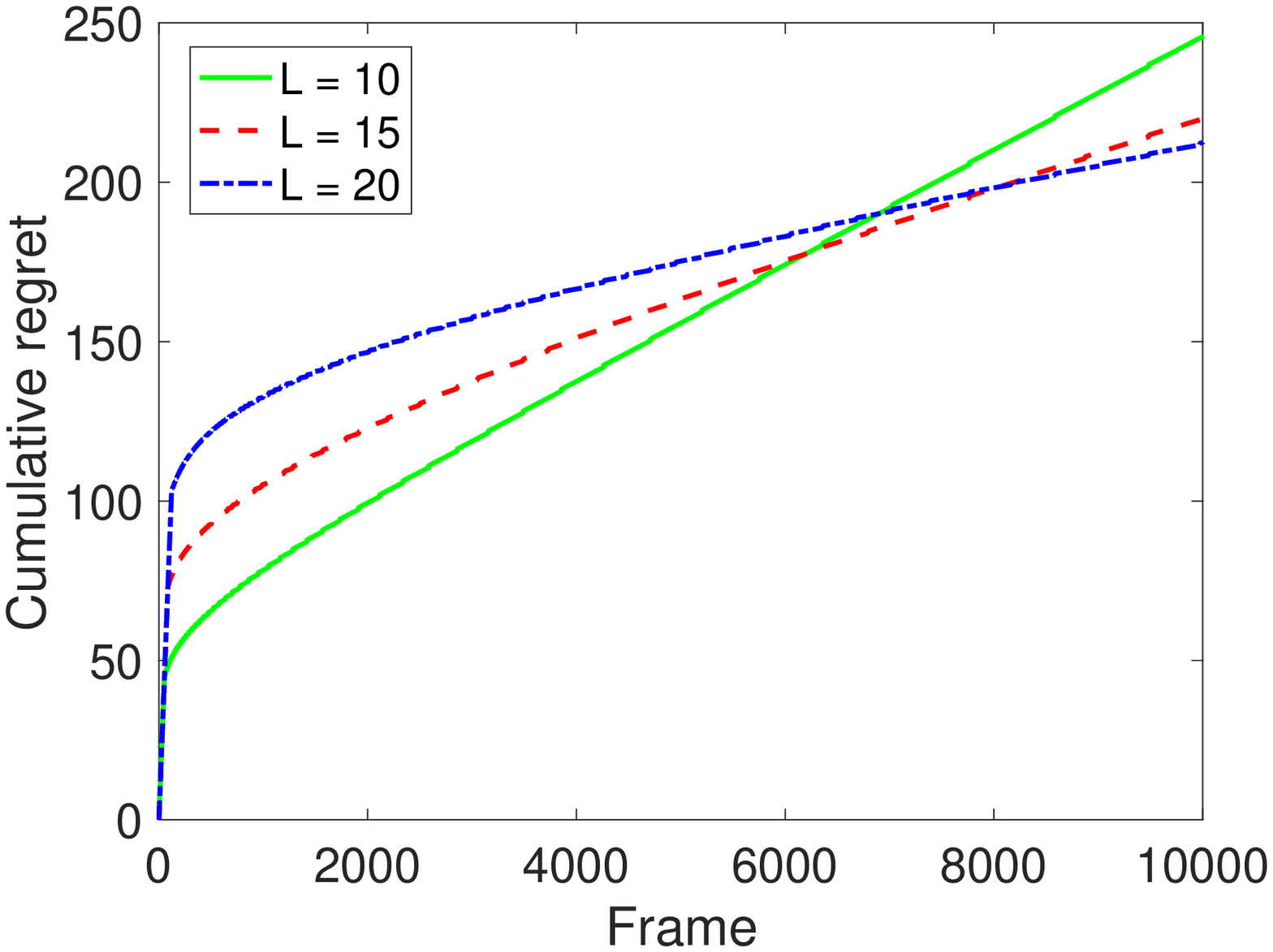}}\label{fig:4}
	\caption{Performance comparison.}\label{fig:2}
\end{figure*}

Denote $kl(\theta_1,\theta_2)$ as the KL-distance between two Bernoulli distributions with parameters $\theta_1$, $\theta_2$, respectively. Then, we have the following lower bound for $\alpha$-consistent strategies.
\begin{Theorem}\label{thm:lower_bound}
	Assume the optimal offline policy with $\theta_1\geq \theta_2\geq \ldots \geq \theta_K$ is to sense/guess the first $K^*$ channels sequentially according to Theorem~\ref{thm:offline} and quit the remaining channels, where $K^*<K$. 
	Then, under any $\alpha$-consistent strategy, we have
	\begin{align}
	\liminf_{T \rightarrow \infty}\frac{R(T)}{\log T} \geq \max_{k = K^*+1,\ldots,K} \min_{\pi \in \Pi_k} \frac{\Delta_\pi}{kl(\theta_k;\max(\theta_1,u'_k))}
	\end{align}
	where $\Pi_k$ is the set of policies under which channel $k$ might be sampled or used with guess, $\Delta_\pi$ is the per-frame regret under policy $\pi$, and $u'_k$ is the upper threshold for channel $k$ under the optimal offline policy, assuming $k$ is the best channel.
\end{Theorem}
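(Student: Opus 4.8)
The plan is to adapt the classical change-of-measure (Lai--Robbins) technique~\cite{Lai:1985:AEA} to our sequential-sensing setting. I would fix a suboptimal channel $k \in \{K^*+1,\ldots,K\}$, which under the true instance is quit by the optimal policy and hence never observed. To force any $\alpha$-consistent strategy to explore $k$, I would construct an alternative instance identical to the original except that $\theta_k$ is raised to some $\theta'_k > \max(\theta_1,u'_k)$. By the definition of the upper threshold and the sorting assumption, in this alternative instance channel $k$ becomes the unique best channel and sits above its own upper threshold, so by Theorem~\ref{thm:offline} the optimal policy there is to guess channel $k$ immediately. The optimal policies of the two instances are therefore disjoint: the original quits $k$, while the alternative uses $k$ in every frame.

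Next I would invoke $\alpha$-consistency in both instances. In the alternative instance, consistency forces the algorithm to guess $k$ in all but $o(T^\alpha)$ frames, so $k$ must be observed $\Omega(T)$ times; in the original instance, consistency forces the algorithm to quit $k$ in all but $o(T^\alpha)$ frames. The two behaviors are distinguished only through observations of $X_k$, which are Bernoulli with parameters $\theta_k$ and $\theta'_k$, respectively (both sensing and guessing on $k$ yield a Bernoulli read-out of $X_k$). A likelihood-ratio computation shows that the log-likelihood ratio between the two instances, restricted to these observations, equals $O_k(T)\,kl(\theta_k;\theta'_k)$ up to lower-order terms, where $O_k(T)$ is the number of times $X_k$ is observed. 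Combining this with the standard argument that links the probabilities of the ``optimal-policy'' event under the two measures yields the sample-complexity bound $\Eb[O_k(T)] \geq (1-o(1))\log T / kl(\theta_k;\theta'_k)$.

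I would then convert this observation count into regret. Writing the regret decomposition $R(T)=\sum_\pi \Delta_\pi\,\Eb[N_\pi(T)]$ over per-frame policies and discarding the nonnegative terms outside $\Pi_k$, every observation of $X_k$ occurs in a frame in which the algorithm follows some $\pi\in\Pi_k$; since reaching $k$ has probability at most one, the expected number of such frames is at least $\Eb[O_k(T)]$, whence $R(T)\geq (\min_{\pi\in\Pi_k}\Delta_\pi)\,\Eb[O_k(T)]$. Letting $\theta'_k\downarrow\max(\theta_1,u'_k)$ so that $kl(\theta_k;\theta'_k)\to kl(\theta_k;\max(\theta_1,u'_k))$, dividing by $\log T$, taking $\liminf$, and finally maximizing over all suboptimal channels $k$ yields the stated bound.

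The hard part will be making the change of measure rigorous under the partial, data-dependent observation model: because channel $k$ is seen only when the sensing sequence actually reaches it (all higher-ranked sensed channels were busy) or when it is guessed, the information accrued about $\theta_k$ is governed by the random number of observations $O_k(T)$ rather than by the number of frames, so the likelihood ratio must be decomposed along the realized observation stream and Wald-type stopping-time care is required. A related subtlety is ensuring that the policy attaining $\min_{\pi\in\Pi_k}\Delta_\pi$ is genuinely one that produces an observation of $X_k$, so that the regret-per-exploration and the information-per-exploration are charged to the same frames; this is precisely what the definition of $\Pi_k$ is designed to guarantee.
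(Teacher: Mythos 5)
Your proposal is correct, and its problem-specific core coincides with the paper's proof: the same alternative instance $\thetav'$ in which $\theta_k$ is raised above $\max(\theta_1,u'_k)$ so that guessing channel $k$ becomes the unique optimal policy, the same observation model (the likelihood ratio between the two instances is supported only on the frames where $X_k$ is actually sensed or guessed), the same use of $\alpha$-consistency under $\thetav'$ plus Markov's inequality, and the same final conversion $R(T)\geq \min_{\pi\in\Pi_k}\Delta_\pi$ times the expected count involving channel $k$, followed by letting $\theta'_k\downarrow\max(\theta_1,u'_k)$ and maximizing over $k$. Where you genuinely differ is the middle, information-theoretic step. The paper runs the classical Lai--Robbins truncation: it introduces the event $\Cc_T=\{N_k(T)<f_T\}\cap\{\hat{kl}_{\hat{N}_k(T)}\leq(1-\frac{\epsilon}{2})\log T\}$, shows $\Pb_{\thetav}[\Cc_T]=o(1)$ by bounding the likelihood ratio on this event and applying consistency only under $\thetav'$, and then removes the second constraint via the maximal law of large numbers, obtaining the high-probability statement $\Pb_{\thetav}[N_k(T)\geq f_T]\to 1$, where $N_k(T)$ counts frames whose policy involves $k$ (which dominates the observation count). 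You instead invoke the KL chain rule (Wald's identity), $\mathrm{KL}(\Pb_{\thetav},\Pb_{\thetav'})=\Eb_{\thetav}[O_k(T)]\,kl(\theta_k;\theta'_k)$, combined with the data-processing inequality applied to the distinguishing event; this requires consistency under both instances (the paper needs it only under $\thetav'$), but it bounds $\Eb_{\thetav}[O_k(T)]$ directly in expectation and absorbs the random-number-of-observations issue---the ``Wald-type care'' you correctly flag as the hard part---into the chain rule rather than into an almost-sure maximal-LLN argument. Both routes are standard and yield the same constant, since $O_k(T)\leq N_k(T)$ and both counts are multiplied by $\min_{\pi\in\Pi_k}\Delta_\pi$; the paper's version buys a statement holding with probability tending to one under weaker consistency assumptions, while yours gives a cleaner expectation-level derivation that sidesteps the almost-sure convergence machinery.
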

Theorem~\ref{thm:lower_bound} indicates that for most system settings where at least one channel is left out under the optimal offline policy (i.e., $K^*<K$), the regret lower bound under any $\alpha$-consistent online strategy scales in $\log T$. This scaling matches the upper bound in Theorem~\ref{thm2}, thus our online strategy is order-optimal.  We restrict to the situation $K^*<K$ because, under this setting, we are able to identify some sub-optimal policies adopted under an online strategy easily, i.e., any policy involves sense or guess over channel $k$, $k>K^*$,  is strictly sub-optimal. By showing that the probability of choosing a policy that involves channel $k$ cannot be too small, we obtain a lower bound on the regret. The proof of Theorem~\ref{thm:lower_bound} can be found in Appendix~\ref{appx:thm3}.

\if{0}
\section{Time-varying Probing and Transmission Costs}\label{sec:time-varying}
In this section, we extend our analysis and algorithm in Sections~\ref{sec:offline} and \ref{sec:online} to the scenario where the probing and transmission costs are time-varying and known to the user before it makes any probing and transmission decisions. This can be interpreted as the case where the cost is subject to the resources available, and is time-varying if the resources available change in time. This can also correspond to the case where the cost reflects the QoS requirements for different applications, and adaptively changes if request from different applications arrive to the transceiver randomly.

To make the problem tractable, we make the following assumptions.
\begin{Assumptions}
\end{Assumptions}

In the above, we consider the situation that the reward $b_{0}$, probing cost $c_{0}$ and transmission cost $p_{0}$ are known and fixed. In reality, this may not be the case. Therefore, we have solutions for the following situations and give a brief description about the solutions.\\

Inspired by \cite{HW17}, we assume that the probing cost is time varying, which means the cost is random variable but we know it before we make any decision. We assume that the probing cost for all channels are the same and at time t is $c_{t}$, for simplicity we assume that $c_{t}$ can only take two values, which denote as $\{c_{1},c_{2}\}$, we have $\Pb[c_{t}=c_{1}]=\rho_{1},~\Pb[c_{t}=c_{2}]=\rho_{2}=1-\rho_{1} $ and  $c_{1}>c_{2}\geq0$. 

\subsubsection{Best Offline Policy}
Under the offline situation, i.e, we know all the statistics ($\theta_{i}, i\in[K]$). For fixed all statistic, the best strategy is fix (or time invariant).  We denote the best policy for probing cost equals to c as $G(c)$, as we can see, the best policy in the main part (specifically, the Theorem \ref{thm:offline}) is denoted as $G(c_{0})$. Then, in the time-varying case, the best policy for time t, $c_{t}=c_{1}$ can be denoted $G(c_{1})$ for time t, $c_{t}=c_{2}$ can be denoted $G(c_{2})$. Since, we didn't give any restriction about $c_{0}$ except $c_{0}>0$, then we can simply change the parameter $c_{0}$ in the Theorem \ref{thm:offline} and get the best policy $G(c_{1}),G(c_{2})$ for the cost $c_{1}, c_{2}$.

\subsubsection{Online Policy and Analysis}
Inspired by \cite{HW17}, we want to explore more aggressively to learn all statistics when the probing cost is low and explore conservatively when the probing cost is high. In other words, we have the following algorithm. Here, $T_{i}^{1}$ , ($T_{i}^{2}$), denote the number of prob of channel i under exploration and when $c_{t}=c_{1}$($c_{t}=c_{2}$)
\begin{algorithm}[H]
	\caption{Joint Learning and Optimization\\{\it Input}: $b_0$, $p_0$, $c_0$, $D(t)$.}
	\begin{algorithmic}[1]
		\State {\bf Initialization}: Sample each channel at $t=1$; Set $T_i (1) = 1$ for all $i\in[K]$; Record $\Fc_1$.
		\While {$t$}
		\State $t:=t+1$;
		\If $c_{t}=c_{1}$
		\State Let $\Ec(t):=\{i:T_i^{1}(t)<D_{c1}(t)\}$.
		\If {$\Ec(t)\neq \emptyset$} \Comment{{\it Exploration}}
		\State Sample every channel in $\Ec(t)$.
		\State Transmit over one available channel in $\Ec(t)$.
		\Else  \Comment{{\it Exploitation}}
		\State  Sort channels with $\hat{\theta}_i:=\frac{\sum_{k=1}^{T_i^{1}(t)}X_i(k)}{T_i^{1}(t)}$
		\State Calculate $\{u_i\}$, $\{l_i\}$ in Theorem~\ref{thm:offline} using $\{\hat{\theta}_i\}$.
		\State Take actions according to Theorem~\ref{thm:offline} using $\{\hat{\theta}_i\}$.
		\EndIf
		\Else 
		\State Let $\Ec(t):=\{i:T_i^{2}(t)<D_{c2}(t)\}$.
		\If {$\Ec(t)\neq \emptyset$} \Comment{{\it Exploration}}
		\State Sample every channel in $\Ec(t)$.
		\State Transmit over one available channel in $\Ec(t)$.
		\Else  \Comment{{\it Exploitation}}
		\State  Sort channels with $\hat{\theta}_i:=\frac{\sum_{k=1}^{T_i^{2}(t)}X_i(k)}{T_i^{2}(t)}$
		\State Calculate $\{u_i\}$, $\{l_i\}$ in Theorem~\ref{thm:offline} using $\{\hat{\theta}_i\}$.
		\State Take actions according to Theorem~\ref{thm:offline} using $\{\hat{\theta}_i\}$.
		\EndIf
		\EndIf
		\State Update $T_i^{1}(t)$, $T_i^{2}(t)$,$\Fc_t$.
		\EndWhile
	\end{algorithmic}\label{alg:online3}
\end{algorithm}

\begin{Theorem}\label{them:s2}
	There exist sufficiently large constant $L(c_{1}), D(c_{1}), L(c_{2}), D(c_{2}) $ such that the regret under Algorithm\ref{alg:online3} is bounded by \\
	$$R(T)\leq C_{1}\log(\rho_{1}T)+D_{1}
	+C_{2}\log((1-\rho_{1})T)+D_{2} $$
	where $C_{1}=L(c_{1})(b_{0}+Kc_{1}), D_{1}=\frac{\pi^2}{6}+D(c_{1})$\\ $C_{2}=L(c_{2})(b_{0}+Kc_{2}), D_{2}=(\frac{\pi^2}{6}+D(c_{1}))$
\end{Theorem}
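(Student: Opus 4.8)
The plan is to exploit the structural separation built into Algorithm~\ref{alg:online3}: because the two cost regimes maintain independent exploration counters $T_i^1, T_i^2$ and form their empirical estimates $\{\hat{\theta}_i\}$ (together with the reward/cost averages) from disjoint pools of observations, the algorithm behaves within each regime exactly as Algorithm~\ref{alg:online} would for a fixed probing cost. This decoupling suggests proving the bound by reducing to two essentially independent instances of Theorem~\ref{thm2}, one with cost $c_1$ and one with cost $c_2$, and then reassembling the pieces.

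First I would fix the benchmark. Since $c_t$ is revealed before any decision in frame $t$, the optimal cost-aware policy plays $G(c_1)$ on every $c_1$-frame and $G(c_2)$ on every $c_2$-frame (both given by Theorem~\ref{thm:offline} with the corresponding cost), yielding per-frame optimal net rewards $J^*_1$ and $J^*_2$. I would then partition the horizon into $\mathcal{I}_1=\{t:c_t=c_1\}$ and $\mathcal{I}_2=\{t:c_t=c_2\}$ and write $R(T)=R_1(T)+R_2(T)$, where $R_j(T)$ collects the per-frame regret over $\mathcal{I}_j$ against $J^*_j$. Because the channel states $\{X_i(t)\}$, the reward $B$, and the costs are independent of the i.i.d.\ cost process $\{c_t\}$, the observations gathered along $\mathcal{I}_j$ are still i.i.d.\ with the same channel statistics, so the concentration and threshold-sensitivity arguments underlying Theorem~\ref{thm2} apply verbatim on $\mathcal{I}_j$.

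The steps in each regime then mirror the proof of Theorem~\ref{thm2}: (i) relate the errors in the estimated thresholds $\{\hat{u}_i\},\{\hat{l}_i\}$ to the estimation errors in $\{\hat{\theta}_i\}$ and the cost/reward means; (ii) show that once every channel has been sampled order-$\log$ many times in regime $j$, the exploitation-phase policy coincides with $G(c_j)$ with high probability, so the recursive double-threshold decisions of Theorem~\ref{thm:offline} are reproduced correctly; and (iii) bound the contributions of exploration and of the rare mis-estimation events separately, the latter summing to a constant through the usual $\sum_t t^{-2}=\pi^2/6$ tail. Taking $L(c_j),D(c_j)$ large enough forces the per-channel exploration length in regime $j$ to scale as $\log M_j(T)$, where $M_j(T)=|\mathcal{I}_j|$. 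Since $M_j(T)$ is $\mathrm{Bin}(T,\rho_j)$ with mean $\rho_j T$, replacing $M_j(T)$ by $\rho_j T$ inside the logarithm produces the claimed terms $C_j\log(\rho_j T)+D_j$ with $C_1=L(c_1)(b_0+Kc_1)$ and $C_2=L(c_2)(b_0+Kc_2)$.

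The main obstacle is the mismatch between the global frame index $t$ driving the threshold $D_{cj}(t)$ and the regime-local count $T_i^j(t)$, which only advances on regime-$j$ frames; controlling it needs a concentration argument that $M_j(t)/t\to\rho_j$ with overwhelming probability, so that after the threshold is met the effective per-channel sample count in regime $j$ is indeed $\Theta(\log t)$, while the rare event that $M_j(t)$ deviates substantially is absorbed into $D_j$. A secondary difficulty, inherited from the single-cost analysis but now doubled, is that the recursive dependence of the thresholds on the estimates (Section~\ref{sec:online}) lets a single error in some $\hat{\theta}_i$ flip several downstream decisions; bounding the per-frame regret $\Delta_\pi$ of the resulting sub-optimal policies uniformly, and verifying it is $O(b_0+Kc_j)$ in each regime, is what makes the stated constants come out. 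Summing $R_1(T)+R_2(T)$ and taking expectation over the cost randomness then yields the bound.
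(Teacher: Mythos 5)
Your proposal follows the same route as the paper's proof: partition the frames into $S_1=\{t: c_t=c_1\}$ and $S_2=\{t: c_t=c_2\}$, note that within each regime the time-varying-cost algorithm is just Algorithm~\ref{alg:online} run with fixed cost $c_j$ on its own disjoint pool of i.i.d.\ observations, invoke Theorem~\ref{thm2} per regime to get $R_j \le C_{1,c_j}\log|S_j| + C_{2,c_j}$, and sum. The one place you genuinely diverge is the final step, where the random regime sizes must be turned into $\rho_1 T$ and $(1-\rho_1)T$. The paper does this in one line via Jensen's inequality: $\log$ is concave, so $\Eb\left[\log|S_j|\right] \le \log \Eb\left[|S_j|\right] = \log(\rho_j T)$, which produces exactly the stated bound with no leftover terms. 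You instead invoke Binomial concentration of $M_j(T)$ around its mean; that argument is valid but strictly heavier than needed, and carried out carefully it leaks extra additive constants (the multiplicative slack inside the logarithm plus the deviation-probability term) into $D_j$, so it proves the theorem only up to constants rather than in the stated form. On the other hand, the ``main obstacle'' you identify --- the exploration threshold $D_{c_j}(t)$ being driven by the global frame index $t$ while the counter $T_i^j(t)$ advances only on regime-$j$ frames --- is a real wrinkle that the paper's proof silently bypasses by treating each regime as a self-contained run of Algorithm~\ref{alg:online} over a horizon of $|S_j|$ frames; your concentration argument is one legitimate way to reconcile the two clocks, whereas Jensen alone does not speak to it.
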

\begin{proof}
	We simply partition the time to be two part $\{1,2,\ldots,T\}=S_{1}\bigcup S_{2}$ and $S_{1}\bigcap S_{2}=\emptyset$, the first part $S_{1}$ is set of time when $c_{t}=c_{1}$, the second part $S_{2}$ is the set of time when $c_{t}=c_{2}$. Based on the analysis in the main part, we can divide the regret $R(T)$ into two part $R_{1}(|S_{1}|)$ and $R_{2}(|S_{2}|)$, $R_{1}(|S_{1}|)$ is the expectation of regret when $t \in S_{1}$ , $R_{2}(|S_{2}|)$ is the expectation of regret when $t \in S_{2}$
	\begin{eqnarray}
	&& R(T) \\
	&\leq& \Eb_{|S_{1}|}[R_{1}(|S_{1}|)+R_{2}(|S_{2}|)]\\
	&=&\Eb_{|S_{1}|}[R_{1}(|S_{1}|)]+\Eb_{|S_{1}|}[R_{2}(|S_{2}|)]\\
	&=&\Eb_{|S_{1}|}[R_{1}(|S_{1}|)]+\Eb_{|S_{2}|}[R_{2}(|S_{2}|)] \label{eq1}\\
	&\leq&\Eb_{|S_{1}|}[C_{1,c_{1}}\log|S_{1}|+C_{2,c_{1}}]\\ &+&\Eb_{|S_{2}|}[C_{1,c_{2}}\log|S_{2}|+C_{2,c_{2}}]\\
	&\leq& C_{1,c_{1}}\log\Eb_{|S_{1}|}(|S_{1}|)+C_{2,c_{1}}\\
	&+& C_{1,c_{2}}\log\Eb_{|S_{2}|}(|S_{2}|)+C_{2,c_{2}}\label{eq2}\\
	&=& C_{1,c_{1}}\log(\rho_{1}T)+C_{2,c_{1}}\\
	&+& C_{1,c_{2}}\log((1-\rho_{1})T)+C_{2,c_{2}}
	\end{eqnarray}
	here $C_{1,c_{i}}=L(c_{i})(b_{0}+Kc_{i}),C_{2,c_{i}}=(\frac{\pi^2}{6}+D(c_{i}))(b_{0}+Kc_{i},~ \forall i\in \{1,2\})$\\
	The equality (\ref{eq1}) comes from simple fact that $|S_{1}|+|S_{2}|=T$, T is fixed at this point, therefore, take expectation with $|S_{1}|$ is equivalent to take expectation with $|S_{2}|$. \\
	The inequality (\ref{eq2}) comes from the concavity of function $\log$
\end{proof}

As we can see, the expectation of regret of time varying case is on the scale of $O(\log(\rho_{1}T)+\log((1-\rho_{1})T))$, which is significantly smaller than the scale in Theorem $O(\log(t))$. In other words, by incorporating the knowledge of changing $c_{t}$, we can reduce the regret significantly.

\fi

\section{Simulation Results}\label{sec:simulation}
In this section, we evaluate the offline and online spectrum access policies through numerical results.
\subsection{Offline Policy}
We first study how of system parameters would affect the optimal offline policy. We set ${\thetav} = \{0.6,0.5,0.4,0.3,0.2,0.1\}$, $b_0 = 1$, and change the values of $p_0$ and $c_0$ separately. As shown in Table~\ref{table:offline}, for fixed $p_0$, the maximum number of channels to sense or access (i.e., $N$) is monotonically decreasing as $c_0$ increases. However, when $c_0$ is fixed, we do not observe such monotonicity in $p_0$. This can be explained as follows: When $p_0$ is small, the cost of a wrong {\it guess} is small compared with the cost of sensing, thus the system would choose to transmit over the best channel directly without sensing.  When $p_0$ is large, the cost of a wrong {\it guess} outweighs that of sensing, thus the user should choose to sense instead of guess. As $p_0$ grows, the user will probe less channels because the potential gain by sensing a channel (i.e., $\theta_i(b_0-p_0)$) may not cover the sensing cost. 

\begin{table}[t]
	\begin{tabular}{c c c c } 
		\toprule[1.5pt]
		$p_0$ & $c_0$ & $N$ & Action over channel $N$\\ 
		\midrule
		0.50 & 0.15 & 4 &  sense  \\ 
		\hline
		0.50 & 0.17 & 3 &  sense \\ 
		\hline
		0.50 & 0.21 & 2 & sense \\ 
		\hline
		0.50 & 0.23 & 1 & guess\\
		\hline
		0.30 & 0.20 & 1 & guess \\ 
		\hline
		0.40 & 0.20 & 3 & sense \\ 
		\hline
		0.60 & 0.20 & 2 & sense \\ 
		\hline
		0.65 & 0.20 & 1 & sense \\
		\bottomrule[1.5pt]
	\end{tabular}
	\centering
	\caption{Structure of the optimal offline policy.}\label{table:offline}
\end{table}

\subsection{Online Algorithm}
Then, we resort to numerical simulations to verify the effectiveness of Algorithm \ref{alg:online} in the online learning situation. 

Before we present the simulation results, we first introduce two baseline learning algorithms for comparison, namely,
the $\epsilon$-greedy learning algorithm, and the Thompson Sampling (TS) based algorithm.
The only difference between the $\epsilon$-greedy algorithm and Algorithm~\ref{alg:online} is to replace the condition for exploration with $S_t = 1$, where $S_t$ is an independent sample of a Bernoulli distribution with parameter $\epsilon$. Thus, $\epsilon$-greedy algorithm explores the system at a fixed rate. 
TS is a randomized algorithm based on the Bayesian principle, and has generated significant interest due to its superior empirical performance~\cite{Shipra:TS:2012}. We tailor the TS algorithm to our setting, where the key idea of the algorithm is to sample the channel statistics according to Beta distributions, whose parameters are determined by past observations. Though we are not able to characterize its performance theoretically, we do observe significant performance improvement through simulations. Therefore, we include it in this section for comparison purpose. {\em Obtaining an upper bound on the regret under the TS algorithm is one of our future directions.}

\if{0}
\begin{algorithm}[t]
	\caption{Thompson Sampling based Learning\\{\it Input}: $b_0$, $p_0$, $c_0$.}
	\begin{algorithmic}[1]
		\State {\bf Initialization}: Set $t=0$, $S_k = F_k = 0$ for each $k \in [K]$.
		\While {$t$}
		\State $t:=t+1$;
		\State Sample $\hat{\theta}_k$ according to $Beta(S_k+1, F_k+1)$ for each $k \in [K]$.
		\State Calculate the thresholds in Theorem~\ref{thm:offline} using $\hat{\theta}_k$.
		\State Make decisions according to Theorem~\ref{thm:offline} using $\hat{\theta}_k$.
		\State Record the observed channel status in $\Fc_t$.
		\State For each observed channel $j$, set $$(S_j, F_j) = (S_j, F_j) +\lv_{X_j = 1}(1,0) +  \lv_{X_j = 0}(0,1).$$
		\EndWhile
	\end{algorithmic}\label{alg:thmps}
\end{algorithm}
\fi	

We now compare the performances under Algorithm~\ref{alg:online}, the $\epsilon$-greedy algorithm, and the TS algorithm through simulations. We set ${\thetav} = \{0.6,0.5,0.4,0.3,0.2,0.1\}$, $b_0 = 1$, $p_0 = 0.5$, $c_0 = 0.2$, and let the sensing cost, transmission cost and reward be uniform random variables with $\Delta_b=\Delta_p=\Delta_c=0.1$. According to Theorem \ref{thm:offline}, the optimal offline policy is to sense the first three channels sequentially until it finds the first available channel, and quit the current frame if all of them are unavailable. The expected per-frame net reward under the optimal policy is $0.12$. We first set $L=20$, $D=24.85$ for Algorithm~\ref{alg:online}, and $\epsilon=0.001$ for the $\epsilon$-greedy algorithm.
We run each algorithm 100 times and plot the sample-path average in Fig.~\ref{fig:2}(a) and Fig.~\ref{fig:2}(b). As we can see in Fig.~\ref{fig:2}(a), Algorithm~\ref{alg:online} and TS algorithm outperform the $\epsilon$-greedy algorithm significantly as time $T$ becomes sufficiently large. In addition, the $\epsilon$-greedy algorithm achieves smaller regret than Algorithm~\ref{alg:online} when $T$ is small. This is because Algorithm~\ref{alg:online} explores more aggressively initially, resulting in a larger regret. In Fig.~\ref{fig:2}(b), we also notice that the average regrets of Algorithm~\ref{alg:online} and TS algorithm approach zero as $T$ increases, indicating that both algorithms perform better after gaining information of the system. In contrast, the average regret under the $\epsilon$-greedy algorithm approaches zero quickly and never converges to zero, which implies that it does not balance exploration and exploitation properly.

Next, we evaluate the per-frame net reward under those three algorithms, and compare them against the per-frame net reward under the optimal offline policy in Fig.~\ref{fig:2}(c). We notice that the per-frame net rewards under both Algorithm~\ref{alg:online} and TS algorithm converge to the upper bound. The per-frame net reward under Algorithm~\ref{alg:online} drops significantly in certain frames, as indicated by the sharp spikes. This is because Algorithm~\ref{alg:online} has separated exploration stages. Whenever some of the channels have not been observed for sufficient number of times, the user will enter the exploration stage. We also note that the interval between two consecutive spikes grows as $T$ increases, indicating the portion of the exploration stage decays in time.

Finally, we focus on the impact of parameter selection on the performance under Algorithm~\ref{alg:online} in Fig.~\ref{fig:2}(d). We set $L$ to be 10 , 15, 20 respectively while keeping $D=\frac{\log (2K)}{2}L$ according to Lemma~\ref{summary} in Appendix~\ref{appx:online}. Interestingly, we note that the performance does not change monotonically as $L$ varies. Specifically, when $L$ is 10, it results in the smallest regret initially. However, as $T$ increases, the regret becomes even larger than those with $L=15$ and 20. It can be interpreted in this way: the algorithm explores less with a smaller $L$, thus saving the sensing cost in exploration. However, it also converges to the optimal offline policy at a lower rate, as it has less observations. In contrast, the algorithm explores more aggressively with a larger $L$, leading to larger cost in exploration but a faster convergence rate. Initially, the cost of exploration outweighs the reward of transmission, thus a smaller regret can be observed for smaller $L$. When $T$ grows, the reward of transmission outweighs the cost of exploration, and the regret is mainly determined by the estimation accuracy. Therefore, the regret with smaller $L$ grows faster as $T$ increases, and eventually becomes greater than that with larger $L$.

\section{Discussions and Conclusions}\label{sec:discussion}
In this paper, we investigated a discrete-time multi-channel cognitive radio system with random sensing and transmission costs and reward. We started with an offline setting and explicitly identified the recursive double-threshold structure of the optimal solution. With insight drawn from the optimal offline policy, we then studied the online setting and proposed a order-optimal online learning and optimization algorithm. We further compared our algorithm with other baseline algorithms through simulations.

The problem studied in this paper is essentially a cascading bandit problem with ``soft" cost constraint, which itself is a non-trivial extension of \cite{KvetonSWA15}. We believe that the design and analysis of the algorithms in this paper advances the  state of the art in both cognitive radio systems and MAB, and has the potential to impact a broader class of cost-aware learning and optimization problems.

\appendix

\subsection{Proof of Lemma~\ref{lemma:ranking}}\label{appx:lemma:ranking}
We first prove the first half of Lemma~\ref{lemma:ranking} through contradiction. Assume that under the optimal policy, the transmitter senses a channel $i$ right ahead of channel $j$, where $\theta_i<\theta_j$. We construct an alternative policy by switching the probing order of $i$ and $j$. Consider a fixed realization of all $X_i$, $i\in[K]$. Then, those two policies would result in different actions only when all channels sampled ahead of channel $i$ are unavailable, and $X_i\neq X_j$.

Case I: $X_i=1, X_j=0$. This event happens with probability $\theta_i(1-\theta_j)$.
Under the original policy, the user would transmit over channel $i$ after probing. The instantaneous net reward in this step would be $b_0-p_0-c_0$. After switching, the user will transmit over channel $i$ after probing channel $j$ and then channel $i$. This results additional probing cost $c_0$ with the same reward.

Case II: $X_i=0, X_j=1$. This event happens with probability $\theta_j(1-\theta_i)$.
Under the original policy, the user would transmit over channel $j$ after probing both channels $i$ and $j$. After switching, the user will transmit over channel $j$ after probing, but not probing channel $i$. This probing cost will be $c_0$ less with the same reward.

Thus, the expected difference in probing cost would be
\begin{align}
\theta_i(1-\theta_j)c_0+\theta_j(1-\theta_i)(-c_0)=(\theta_i-\theta_j)c_0<0
\end{align}
Therefore, by switching the probing order of channel $i$ and channel $j$, we save probing cost without reducing the reward in expectation. Thus, under the optimal policy, the transmitter must sense the channels sequentially according to a descending order of their means.

We then prove the second half of Lemma~\ref{lemma:ranking} through contradiction as well. Assume the worst channel in $\Pc$, denoted as $i$, is worse than the best channel in $[K]\backslash \Pc$, denoted as $j$ (i.e., $\theta_i<\theta_j$). Under the original policy, there might be two possible actions after probing channel $i$: {\it guess}, i.e., transmit over $j$ without probing, or {\it quit}.  We construct an alternative policy by switching the role of $i$ and $j$. Consider a fixed realization of all $X_i$, $i\in[K]$. Again, those two policies would result in different actions only when all channels sampled ahead of channel $i$ are unavailable, and $X_i\neq X_j$.

Case I: $X_i=1, X_j=0$.
Under the original policy, the user would transmit over channel $i$ after probing. After switching, depending on the action on channel $j$ under the original policy, the user will first sense channel $j$, and then {\it guess} or {\it quit}. This will not incur any additional probing cost, however, the corresponding reward minus transmission cost would be $b_0-p_0$ or $0$.

Case II: $X_i=0, X_j=1$.
Under the original policy, the user would {\it guess} or {\it quit} after probing channel $i$. The corresponding reward would be $b_0-p_0$ or $0$. After switching, the user will transmit over channel $j$ after probing. Again, this will not incur any additional probing cost, however, the corresponding reward minus transmission cost would be $b_0-p_0$.

In conclusion, if the original policy is to {\it guess} after probing channel $i$, there will be no difference in reward and cost after switching for both cases; If the original policy is to {\it quit} after probing channel $i$, then the difference in reward minus transmission cost would be
\begin{equation}
-\theta_j(1-\theta_i)(b_0-p_0)+\theta_i(1-\theta_j)(b_0-p_0) 
=(\theta_i-\theta_j)(b_0-p_0)>0
\end{equation}
Therefore, by switching the role of channel $i$ and channel $j$, we increase net reward in expectation. Thus, under the optimal policy, any channel in $\Pc$ should be better than any other channel outside $\Pc$.

\subsection{Proof of Theorem 1}\label{appx:thm1}

We prove the theorem through induction. As shown in Lemma~\ref{lemma:probe}, if $X_i=1$, then the transmitter should transmit over channel $i$. If $X_i=0$, the transmitter need to decide between three actions: continue probing the next best channel, transmit over the next best channel without probing, or quit.

Define $\Sc_i:=[i,\ldots, K], \quad \forall i=1,2,\ldots, K$.
Then, based on Lemma~\ref{lemma:ranking} and the optimality condition in (\ref{eqn:bellman}), we have
\begin{align}
V(i, 1,\Sc_{i+1})&=b_0-p_0, \quad \forall i=1,2,\ldots,K-1,\\
V(i, 0,\Sc_{i+1})&=\max\big\{ -c_0 +\Eb[V(i+1, X_{i+1},\Sc_{i+2})],\nonumber\\
&\qquad\theta_{i}b_0- p_0, \,0 \big\}, \quad \forall i=0,\ldots,K-1,\\
V(K, X_{K},\emptyset)&=\max\left\{ X_{K} b_0- p_0,\,0 \right\}.\label{eqn:V_K}
\end{align}
Therefore,
\begin{align}\label{eqn:expected_V}
\Eb[V(i, X_{i},\Sc_{i+1})]\hspace{-0.02in}&=\hspace{-0.02in}(b_0-p_0)\theta_{i}\hspace{-0.02in}+\hspace{-0.02in}V(i, 0,\Sc_{i+1})(1-\theta_{i}).
\end{align}
Denote
\begin{align}
E_i&:=V(i, 0,\Sc_{i+1}),\quad \forall i=0,1,\ldots, K.
\end{align}
Then, for  $i=1,2,\ldots,K$,
\begin{align}\label{eqn:recursive}
E_{i-1}&=\max\left\{ -c_0 +(b_0-p_0)\theta_{i}+E_i(1-\theta_{i}), \theta_{i}b_0- p_0, 0 \right\}.
\end{align}
We note that the first two terms inside the max operator in (\ref{eqn:recursive}) are linear functions of $\theta_{i}$. By plotting them in Fig.~\ref{fig:linear}, we observe that $E_{i-1}$ is a piecewise linear function in $\theta_{i}$. Specifically, define
\begin{align}\label{eqn:a_i}
u_{i} &=\max\left\{\frac{p_{0}}{b_{0}}, 1-\frac{c_{0}}{p_{0}+E_i}\right\},\\
l_{i} &=\min\left\{\frac{p_{0}}{b_{0}}, 1-\frac{b_0-p_0-c_{0}}{b_{0}-p_{0}-E_i}\right\}.\label{eqn:b_i}
\end{align}
Then,
\begin{align}\label{eqn:E_i}
E_{i-1}&=\left\{ \begin{array}{ll}
0, & \theta_{i}\in [0, l_i]\\
-c_0 +(b_0-p_0)\theta_{i}+E_i(1-\theta_{i}), & \theta_{i}\in [l_i, u_i]\\
\theta_{i}b_0- p_0 & \theta_{i}\in[u_i,1 ]
\end{array}\right.
\end{align}
and the corresponding actions that the user should take after finding $X_{i-1}=0$ are {\it quit}, {\it sense}, and {\it guess}, respectively.

\begin{figure} [t]
	\centering
	\subfigure[$E_i$ is large ]{ \includegraphics[width=0.225\textwidth]{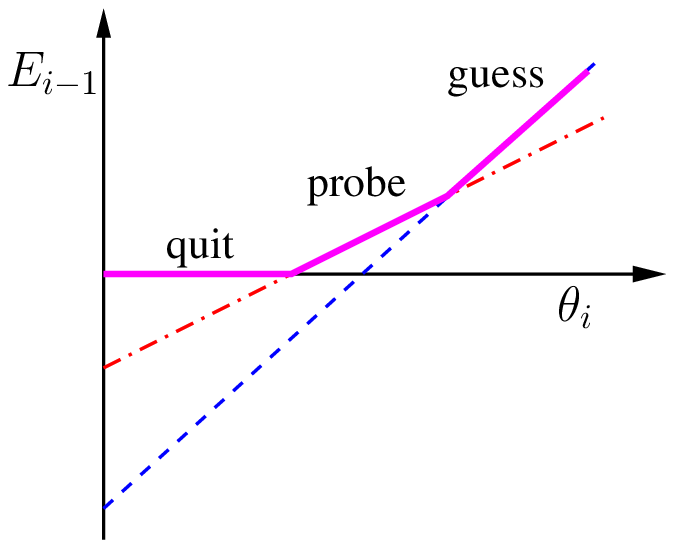}}
	\subfigure[$E_i$ is small ]{ \includegraphics[width=0.225\textwidth]{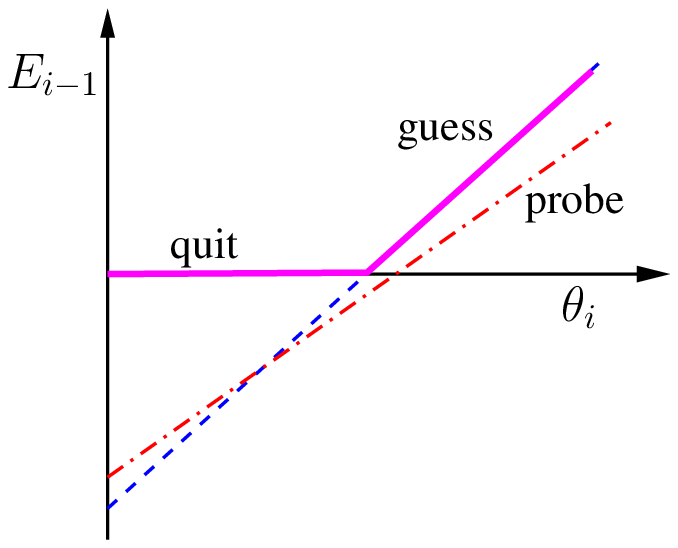}}
	\caption{$E_{i-1}$ as a function of $\theta_i$.}
	\label{fig:linear}
\end{figure}

\subsection{Proof of Theorem 2}\label{appx:online}
The proof consists of three main steps: we begin by first relate the difference between the thresholds $l_{i}, u_{i}$ and their estimates $\hat{l}_{i}, \hat{u}_{i}$ with $\{\theta_{i}-\hat{\theta}_i\}$, $c_0-\hat{c}_0$, $b_0-\hat{b}_0$, $p_0-\hat{p}_0$. Then, we derive an upper bound on the number of samples required to ensure the correct ordering of the channels, as well as the right sensing and transmission decisions with high probability. Finally, we explicitly bound the regret by examining the regret from exploration and exploitation separately.

To facilitate our analysis, we first introduce the following two lemmas without proof.
\begin{Lemma}\label{lemma:max}
	Let $A=\max\{a_1,a_2,\ldots,a_n\}$, $B=\max\{b_1,b_2,\ldots,b_n\}$, where $a_i,b_i\in \Rb$. Then, $$|A-B|\leq \max\{|a_1-b_1|, |a_2-b_2|,\ldots, |a_n-b_n|\}.$$
\end{Lemma}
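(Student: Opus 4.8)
The plan is to prove this elementary inequality directly, exploiting the symmetry between the two finite sequences. Write $M := \max\{|a_1-b_1|,\ldots,|a_n-b_n|\}$ for the quantity appearing on the right-hand side. Since the left-hand side $|A-B|$ is symmetric under interchanging $A$ and $B$, I may assume without loss of generality that $A \geq B$; the complementary case $B > A$ then follows immediately by swapping the roles of $\{a_i\}$ and $\{b_i\}$, which leaves both $|A-B|$ and $M$ unchanged.

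Under the assumption $A \geq B$, we have $|A-B| = A - B$. The key step is to fix an index $k$ at which the first maximum is attained, so that $A = a_k$, and then to compare $a_k$ with the \emph{same-index} entry $b_k$ of the second sequence rather than with the entry achieving $B$. Since $B = \max_i b_i \geq b_k$, I obtain the chain
$$|A-B| = a_k - B \leq a_k - b_k \leq |a_k - b_k| \leq M,$$
where the first inequality uses only that $B$ is a maximum, the second is trivial, and the last is the definition of $M$. Together with the symmetric case this establishes the bound.

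I do not expect any genuine obstacle here, as the result is a routine fact used later only as a technical tool (it will be invoked to propagate estimation errors through the $\max$ operators in the recursion \eqref{eqn:recursive} and the threshold expressions \eqref{eqn:a_i}--\eqref{eqn:b_i}). The one point worth flagging is the choice of the comparison index: one must pair $A=a_k$ with $b_k$ (the entry of the other sequence sharing the index $k$), so that the difference telescopes into a single $|a_k-b_k|$ controlled by $M$; pairing $A$ with the index achieving $B$ would not yield the clean bound.
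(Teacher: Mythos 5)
Your proof is correct. Note that the paper states Lemma~\ref{lemma:max} without proof (it is introduced as one of ``two lemmas without proof''), so there is no authorial argument to compare against; your reasoning---reducing by symmetry to $A \geq B$, fixing an index $k$ with $A = a_k$, and using $B \geq b_k$ to get $A - B \leq a_k - b_k \leq \max_i |a_i - b_i|$---is the standard, complete argument for this fact and fills the gap the paper left.
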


\begin{Lemma}\label{lemma:E_i}
	$E_i$ monotonically decreases as $i$ increases from $0$ to $K$.
\end{Lemma}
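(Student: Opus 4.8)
The plan is to prove the claim by backward induction on the recursion (\ref{eqn:recursive}), starting from the boundary value. Write the recursion compactly as $E_{i-1}=\max\{A_i,B_i,0\}$, where $A_i:=-c_0+(b_0-p_0)\theta_i+E_i(1-\theta_i)$ is the value of the \emph{sense} option and $B_i:=\theta_ib_0-p_0$ is the value of the \emph{guess} option; the boundary case (\ref{eqn:V_K}) gives $E_K=\max\{-p_0,0\}=0$. To get $E_{i-1}\geq E_i$ it suffices to show that each of the three quantities inside the maximum defining $E_{i-1}$ dominates the corresponding quantity inside the maximum defining $E_i=\max\{A_{i+1},B_{i+1},0\}$, since by the elementary fact that $a_k\geq b_k$ for all $k$ implies $\max_k a_k\geq\max_k b_k$ the two maxima then compare. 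The $0$-terms coincide, and $B_i\geq B_{i+1}$ is immediate from the channel ordering $\theta_i\geq\theta_{i+1}$ in Assumption~\ref{assump:dist}.1 together with $b_0>0$. All the work is therefore in comparing the two sense values.

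Before the main induction I would establish an auxiliary two-sided bound: $0\leq E_j\leq b_0-p_0$ for every $j$. The lower bound is immediate since the maximum in (\ref{eqn:recursive}) always includes the option $0$. The upper bound I would prove by its own backward induction: $E_K=0\leq b_0-p_0$ by Assumption~\ref{assump:dist}.5, and if $E_{i+1}\leq b_0-p_0$ then $A_{i+1}\leq(b_0-p_0)\theta_{i+1}+(b_0-p_0)(1-\theta_{i+1})=b_0-p_0$ (using $c_0\geq0$) and $B_{i+1}\leq b_0-p_0$ (using $\theta_{i+1}\leq1$), so $E_i\leq b_0-p_0$. This upper bound is precisely what makes the sense-value comparison go through.

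For the inductive step of the main claim, assume the hypothesis $E_i\geq E_{i+1}$ and form the difference of the two sense values. The $-c_0$ terms cancel, and a short rearrangement gives the identity
\begin{equation}
A_i-A_{i+1}=(\theta_i-\theta_{i+1})(b_0-p_0-E_{i+1})+(1-\theta_i)(E_i-E_{i+1}).
\end{equation}
The first summand is nonnegative because $\theta_i\geq\theta_{i+1}$ (Assumption~\ref{assump:dist}.1) and $E_{i+1}\leq b_0-p_0$ (the auxiliary bound), and the second is nonnegative because $\theta_i\leq1$ and $E_i\geq E_{i+1}$ (the induction hypothesis). Hence $A_i\geq A_{i+1}$. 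Combined with $B_i\geq B_{i+1}$ and the equality of the $0$-terms, taking maxima yields $E_{i-1}\geq E_i$. The base case $E_{K-1}\geq E_K=0$ holds directly since $E_{K-1}\geq0$, so the induction closes.

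The main obstacle is the sense-versus-sense comparison, and specifically recognizing that a direct attempt at bounding $A_i-A_{i+1}$ fails unless one first controls $E_{i+1}$ from above: without $E_{i+1}\leq b_0-p_0$ the term $(\theta_i-\theta_{i+1})(b_0-p_0-E_{i+1})$ need not be nonnegative. A more conceptual alternative would read $E_{i-1}$ and $E_i$ as the optimal DP values with channel pools $\{i,\dots,K\}$ and $\{i+1,\dots,K\}$ and argue by coupling that enlarging the pool with a better channel cannot hurt; but the recursive ``sense or guess over the current best channel'' structure makes a clean coupling awkward, so I expect the algebraic induction above to be the most economical route.
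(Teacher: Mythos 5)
Your proof is correct and follows essentially the same route as the paper's: a backward induction that compares the \emph{sense}, \emph{guess}, and \emph{quit} terms inside the two maxima, with the bound $E_i\leq b_0-p_0$ (which the paper carries along inside the same induction rather than as a separate auxiliary lemma) serving as the key fact that makes the sense-term comparison work. The only difference is organizational, so there is nothing further to flag.
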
	

\if{0}
\begin{proof}
	We prove this lemma through induction.
	According to (\ref{eqn:V_K})(\ref{eqn:expected_V}) and (\ref{eqn:recursive}), we have $E_K=0$, and $E_{K-1}=\max\{-c_0 +(b_0-p_0)\theta_{K}, \theta_{K}b_0- p_0, 0 \}$, which is bounded between $0$ and $b_0-p_0$. Thus, we have $b_0-p_0\geq E_{K-1}\geq E_K$.
	Assume $b_0-p_0\geq E_{i-1}\geq E_i$ is true for $i\leq K$. Then, we need to show that
	$b_0-p_0\geq E_{i-2}\geq E_{i-1}$.
	
	Based on the recursive relationship in (\ref{eqn:recursive}), it suffices to show
	\begin{align}
	-c_0 +(b_0-p_0)\theta_{i-1}+E_{i-1}(1-\theta_{i-1}) \nonumber \\
	\geq  -c_0 +(b_0-p_0)\theta_{i}+E_{i}(1-\theta_i)
	\end{align}
	Since we assume $E_{i-1}\geq E_{i}$, it then suffices to prove that
	\begin{align}
	(b_0-p_0)\left(\theta_{i-1}-\theta_{i}\right)+E_{i}(1-\theta_{i-1})\geq E_{i}(1-\theta_i)
	\end{align}
	which is true due to the assumption that $b_0-p_0\geq E_i$.
\end{proof}
\fi

\begin{Lemma}\label{lemma:estimate_error_bound}
	If for a sufficiently small $\epsilon$, we have $|\hat{\theta}_{i}-\theta_{i}|<\epsilon$, $\forall i$, $|\hat{b}_{0}-b_{0}|<\epsilon$, $|\hat{c}_{0}-c_{0}|<\epsilon$, $|\hat{p}_{0}-p_{0}|<\epsilon$, then there exist constants $A_{i}, B_{i}$, such that $|\hat{u}_{i}-u_{i}|< A_{i}\epsilon$, $|\hat{l}_{i}-l_{i}|< B_{i}\epsilon$, $\forall i$.
\end{Lemma}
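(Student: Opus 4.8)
The plan is to show that the thresholds $u_i$ and $l_i$ depend on the underlying parameters $\{\theta_i\}, b_0, c_0, p_0$ in a locally Lipschitz manner, so that $\epsilon$-accurate parameter estimates produce $O(\epsilon)$-accurate threshold estimates. Since $u_i$ and $l_i$ are given in closed form by (\ref{eqn:a_i}) and (\ref{eqn:b_i}) in terms of $b_0, c_0, p_0$ and the auxiliary quantity $E_i$, the argument splits naturally into two stages: first control the perturbation of $E_i$, then push that bound through the threshold formulas.

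First I would bound $|\hat E_i - E_i|$ by backward induction on $i$. The base case $E_K = \hat E_K = 0$ is immediate from the boundary condition. For the inductive step I apply Lemma~\ref{lemma:max} to the three-way maximum defining $E_{i-1}$ in (\ref{eqn:recursive}): the difference $|\hat E_{i-1} - E_{i-1}|$ is at most the largest of the pointwise differences of its three arguments. Each argument is affine in $\theta_i, b_0, c_0, p_0, E_i$ with coefficients such as $(1-\theta_i)$ and $\theta_i$ that are bounded by known constants, and $E_i$ itself is bounded via $0 \le E_i \le b_0 - p_0$ (from Lemma~\ref{lemma:E_i} and its proof). Expanding each product and using $|\hat\theta_i - \theta_i|, |\hat b_0 - b_0|, |\hat c_0 - c_0|, |\hat p_0 - p_0| < \epsilon$ together with the inductive hypothesis $|\hat E_i - E_i| \le M_i\epsilon$, I obtain a recursion of the form $M_{i-1} \le M_i + \kappa$ for a constant $\kappa$ depending only on the parameter bounds. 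Hence $M_i$ grows at most linearly in $K - i$, and $|\hat E_i - E_i| \le M_i \epsilon$ for all $i$.

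Second I would pass to the thresholds. Viewing $u_i$ in (\ref{eqn:a_i}) as $\max\{p_0/b_0,\, 1 - c_0/(p_0 + E_i)\}$ and applying Lemma~\ref{lemma:max} once more, it suffices to bound the perturbation of each branch. The branch $p_0/b_0$ is Lipschitz since $b_0 \ge b_0 - p_0 > 0$ by Assumption~\ref{assump:dist}.5, and the branch $1 - c_0/(p_0 + E_i)$ is a rational function whose perturbation is controlled by $|\hat c_0 - c_0|, |\hat p_0 - p_0|, |\hat E_i - E_i|$ once its denominator is bounded away from zero. Combining with the first-stage bound gives $|\hat u_i - u_i| \le A_i \epsilon$, and the identical argument applied to (\ref{eqn:b_i}), whose second denominator is $b_0 - p_0 - E_i$, gives $|\hat l_i - l_i| \le B_i\epsilon$.

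The main obstacle is precisely keeping the denominators $p_0 + E_i$ and $b_0 - p_0 - E_i$ bounded away from zero, for both the true and the estimated parameters. I would resolve this using the max/min structure itself: whenever a denominator approaches its singularity the corresponding rational branch diverges and is therefore discarded by the outer $\max$ (for $u_i$) or $\min$ (for $l_i$) in favor of the benign branch $p_0/b_0$; consequently the rational branch is relevant only on the region where its denominator exceeds a strictly positive gap $\delta$ determined by the true parameters. That gap is strictly positive in the non-degenerate regime, using $b_0 - p_0 > 0$ and, for $l_i$, the strict inequality $E_i < b_0 - p_0$ forced by a positive sensing cost $c_0 > 0$ in (\ref{eqn:recursive}). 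This is exactly where the hypothesis that $\epsilon$ be sufficiently small enters: choosing $\epsilon$ small relative to $\delta$ and the $M_i$ keeps the estimated denominators above $\delta/2$, so the Lipschitz constants $A_i, B_i$ — which scale like $1/\delta^2$ times the parameter bounds and the $M_i$ — remain finite.
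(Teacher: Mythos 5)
Your proposal is correct and follows essentially the same two-stage route as the paper's proof: a backward-induction perturbation bound on $E_i$ via Lemma~\ref{lemma:max} applied to the recursion (\ref{eqn:recursive}), followed by a Lipschitz push-through of the closed forms (\ref{eqn:a_i})--(\ref{eqn:b_i}), with $\epsilon$ taken small relative to a gap that keeps the denominators away from zero. Two points of comparison are worth recording. First, your additive recursion $M_{i-1}\le M_i+\kappa$ is a genuine (if minor) refinement: writing $(1-\hat{\theta}_{i+1})\hat{E}_{i+1}-(1-\theta_{i+1})E_{i+1}=(1-\hat{\theta}_{i+1})(\hat{E}_{i+1}-E_{i+1})+(\theta_{i+1}-\hat{\theta}_{i+1})E_{i+1}$ and using $1-\hat{\theta}_{i+1}\le 1$ and $E_{i+1}\le b_0-p_0$ gives constants growing linearly in $K-i$, whereas the paper's cruder splitting yields $|\hat{E}_{i}-E_{i}|\le 2|\hat{E}_{i+1}-E_{i+1}|+\epsilon(3+2b_0)$ and hence constants $e_i$ of order $2^{K-i}$; both suffice for the lemma, but yours gives smaller $A_i,B_i$ and hence smaller constants downstream in Lemma~\ref{summary}. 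Second, one sub-claim in your denominator discussion is false, though harmlessly so: the rational branch of $l_i$ equals $1-\frac{b_0-p_0-c_0}{b_0-p_0-E_i}=\frac{c_0-E_i}{b_0-p_0-E_i}$, which tends to $-\infty$ as its denominator vanishes (when $b_0-p_0>c_0$), and a $\min$ \emph{selects} a branch diverging to $-\infty$ rather than discarding it; the discarding mechanism you describe is valid only for the $\max$ defining $u_i$. What actually carries the $l_i$ case is the other argument you give in the same paragraph, namely that the true gap $b_0-p_0-E_i$ is strictly positive and $\epsilon$ is taken small relative to it, which is precisely the paper's (implicit) assumption in bounding (\ref{eqn:err_11}). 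Note also that this gap requires not only $c_0>0$ but also $\theta_{i+1}<1$, since the guess branch $\theta_{i+1}b_0-p_0$ in (\ref{eqn:recursive}) attains $b_0-p_0$ when $\theta_{i+1}=1$; the paper is silent on this edge case as well, so with that one sentence corrected your proof stands at the same level of rigor as the paper's.
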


\begin{proof}
	Based on (\ref{eqn:E_i}), the estimate of $E_{i}$, denote as $\hat{E}_{i}$, can be expressed as follows
	\begin{align*}
	\hat{E}_{i}=\max \{ & -\hat{c}_{0}+\hat{\theta}_{i+1}(\hat{b}_{0}-\hat{p}_{0})+(1-\hat{\theta}_{i+1})\hat{E}_{i+1}, \\
	&\quad\hat{\theta}_{i+1} \hat{b}_{0}-\hat{p}_{0},	\,0 \}.
	\end{align*}	
	Denote $\bar{E}_{i}:=|\hat{E_{i}}-E_{i}|$. According to Lemma~\ref{lemma:max}, we have
	\begin{align}
	\bar{E}_{i}\leq   \max\{ & |\hat{c}_{0}-c_{0}|+|\hat{\theta}_{i+1}\hat{b}_{0}-\theta_{i+1}b_{0}|+|\hat{\theta}_{i+1}\hat{p}_{0}-\theta_{i+1}p_{0}|\nonumber\\
	&+|\hat{E}_{i+1}-E_{i+1}|+|\hat{\theta}_{i+1}\hat{E}_{i+1}-\theta_{i+1}E_{i+1}|,\nonumber\\
	& |\hat{\theta}_{i+1}\hat{b}_{0}-\theta_{i+1}b_{0}|+|\hat{p}_{0}-p_{0}|,\,	0\}.	\label{eqn:Delta1}
	\end{align}

	We note that
	\begin{align*}
	&	|\theta_{i+1} E_{i+1}-\hat{\theta}_{i+1} \hat{E}_{i+1}|\nonumber \\
	&= | \theta_{i+1} E_{i+1}- \hat{\theta}_{i+1} E_{i+1} + \hat{\theta}_{i+1} E_{i+1}- \hat{\theta}_{i+1} \hat{E}_{i+1}  | \\
	&\leq | \hat{\theta}_{i}-\theta_{i}| E_{i+1}+\hat{\theta}_{i+1} |\hat{E}_{i}-E_{i}|\leq \epsilon (b_{0}-p_{0})+\bar{E}_{i+1}.
	\end{align*}	
	Besides, we have
	\begin{align}
	& |\theta_{i+1} p_{0}-\hat{\theta}_{i+1} \hat{p}_{0}|=|\theta_{i+1} p_{0}-\hat{\theta}_{i+1} {p}_{0}+\hat{\theta}_{i+1} {p}_{0}-\hat{\theta}_{i+1} \hat{p}_{0}|\nonumber\\
	&\leq | \hat{\theta}_{i+1}-\theta_{i+1}| p_{0}+ |\hat{p}_{0}-p_{0}|\hat{\theta}_{i+1}\leq \epsilon (1+p_0),\label{eqn:bound_e}
	\end{align}
	where (\ref{eqn:bound_e}) follows from the fact that $\hat{\theta}_{i+1}\leq 1$. 
	
	Similarly, we can show that $|\theta_{i+1} b_{0}-\hat{\theta}_{i+1} \hat{b}_{0}|<\epsilon (1+b_0)$.
	Plugging into (\ref{eqn:Delta1}), we have
	\begin{align}
	\bar{E}_{i}
	&\leq \max\{\epsilon+\epsilon(1+b_0)+\epsilon(1+p_0)+\bar{E}_{i+1}\nonumber\\
	&\qquad+\epsilon(b_{0}-p_{0})+\bar{E}_{i+1}, \epsilon(1+b_0)+\epsilon,\,	0\}\nonumber\\
	&\leq \epsilon+\epsilon(1+b_0)+\epsilon(1+p_0)+\bar{E}_{i+1}+\epsilon(b_{0}-p_{0})+\bar{E}_{i+1}\nonumber\\
	&=  \epsilon (3+2b_{0})+2 \bar{E}_{i+1}.\label{eqn:Delta_bound1}
	\end{align}
	Multiplying $2^i$ to both sides of (\ref{eqn:Delta_bound1}), we have
	\begin{align}
	2^i\bar{E}_{i}&\leq 2^i \epsilon (3+2 b_{0})+2^{i+1} \bar{E}_{i+1},\quad\forall i\in[K].\label{eqn:Delta_recurvise1}\nonumber
	\end{align}
	Applying (\ref{eqn:Delta_recurvise1}) recursively, we have
	\begin{align}
	2^i \bar{E}_{i}
	&\leq \sum_{j=i}^{K-1}2^j \epsilon (3+2b_{0})+2^K \bar{E}_{K}\\
	&\leq \epsilon (3+2 b_{0})(2^K-2^i)+2^K\epsilon(b_0+2), \label{eqn:geometric_series1}
	\end{align}
	where (\ref{eqn:geometric_series1}) follows from the fact that {$\hat{E}_{K}=\max\{\hat{\theta}_{k} \hat{b}_{0}-\hat{p}_{0},0\} $, thus $\bar{E}_{K}\leq \epsilon (b_{0}+2)$} accordingly to Lemma~\ref{lemma:max}.
	Therefore,
	\begin{align}\label{eqn:Delta_e1}
	\bar{E}_{i}&\leq \epsilon \frac{(3+2 b_{0})(2^K-2^i)+2^K (b_0+2)}{2^i} :=\epsilon  e_i,	
	\end{align}
	where 
	\begin{align}\label{eqn:e_i_g}
	e_{i}:=\frac{(3+2 b_{0})(2^K-2^i)+2^K (b_0+2)}{2^i}.
	\end{align}	
	
	Next, we will use the relationship between $\bar{E}_i$ and $e_i$ in (\ref{eqn:Delta_e1}) to bound $|l_{i}-\hat{l}_{i}|$ and $|u_{i}-\hat{u}_{i}|$.
	Toward this end,	we first note that for any $a,b,\hat{a},\hat{b}$ satisfying $|a-\hat{a}|<\epsilon_a$ and $|b-\hat{b}|<\epsilon_b$, 
	\begin{align}
	&|\hat{a}b-a\hat{b}| = |(\hat{a}-a)b+a(b-\hat{b})| \leq \epsilon_a|b|+\epsilon_b|a|.\label{fact}
	\end{align}
	Therefore, 
	\begin{align}
	&|u_{i}-\hat{u}_{i}|\nonumber\\
	&=\bigg|\max\left\{\frac{p_{0}}{b_{0}}, 1-\frac{c_{0}}{p_{0}+E_i}\right\} -\max\left\{\frac{\hat{p}_{0}}{\hat{b}_{0}}, 1-\frac{\hat{c}_{0}}{\hat{p}_{0}+\hat{E}_i}\right\}\bigg|\nonumber\\
	&\leq \max\left\{\left |\frac{p_{0}}{b_{0}}- \frac{\hat{p}_{0}}{\hat{b}_{0}}\right|,\left | \frac{c_{0}}{p_{0}+E_i}-\frac{\hat{c}_{0}}{\hat{p}_{0}+\hat{E}_i}\right|\right\}\label{eqn:error1} \\
	&\leq\left |\frac{p_{0}}{b_{0}}- \frac{\hat{p}_{0}}{\hat{b}_{0}}\right|+\left| \frac{c_{0}}{p_{0}+E_i}-\frac{\hat{c}_{0}}{\hat{p}_{0}+\hat{E}_i}\right|\\
	&\leq \frac{|p_{0}\hat{b}_{0}-\hat{p}_{0}b_{0}|}{b_{0}\underline{b}}+\frac{|c_{0}\hat{p}_{0}+c_{0}\hat{E}_{i}-\hat{c}_{0}p_{0}-\hat{c}_{0}E_{i}|}{p_{0}\underline{p}} \label{eqn:error2}\\
	&\leq  \frac{\epsilon(p_{0}+b_{0})}{b_{0}\underline{b}}+\frac{\epsilon(c_{0}+p_{0})+c_{0}\bar{E}_{i}+\epsilon E_i}{p_{0}\underline{p}}\label{eqn:error3}\\
	&:= A_i\epsilon,\nonumber
	\end{align}
	where (\ref{eqn:error1}) follows from Lemma~\ref{lemma:max}, (\ref{eqn:error2}) follows from Assumptions~\ref{assump:dist} and $E_{i}>0$, (\ref{eqn:error3}) comes from (\ref{fact}), and $A_{i}:=\frac{p_{0}+b_{0}}{b_{0}b}+\frac{c_{0}+p_{0}+c_{0}e_{i}+b_{0}-p_{0}}{p_{0}\underline{p}}$.
	
	Similarly, we have
	\begin{align}
	&	|l_{i}-\hat{l}_{i}|\nonumber\\
	&\leq  \left|\frac{p_{0}}{b_{0}}- \frac{\hat{p}_{0}}{\hat{b}_{0}}\right|+\left|\frac{c_{0}-E_{i}}{b_{0}-p_{0}-E_{i}}-\frac{\hat{c}_{0}-\hat{E}_{i}}{\hat{b}_{0}-\hat{p}_{0}-\hat{E_{i}}}\right|\label{eqn:err_1}\\
	&\leq \frac{|p_{0}\hat{b}_{0}-\hat{p}_{0}b_{0}|}{b_{0}b}+\nonumber\\
	&\frac{|(c_{0}-E_{i})(\hat{b_{0}}-\hat{p}_{0}-\hat{E}_{i})-(\hat{c}_{0}-\hat{E}_{i})(b_{0}-p_{0}-E_{i})|}{(b_{0}-p_{0}-E_{i}-(2+e_{i})\epsilon)^2}\label{eqn:err_11}\\
	&\leq \epsilon \left(\frac{p_{0}+b_{0}}{b_{0}b}+ (2+e_i)\frac{ |c_{0}-E_{i}|+|b_{0}-p_{0}-E_{i}|}{(b_{0}-p_{0}-E_{i}-(2+e_{i})\epsilon)^2}\right)\label{eqn:err_2}\\
	&:=B_i\epsilon,\nonumber
	\end{align}
	where (\ref{eqn:err_1}) follows from the fact that 
	\begin{align*}
	&|\min\{a,b\}-\min\{c,d\}|=|\!-\!\max\{-a,-b\}\!+\!\max\{-c,-d\}|\\
	&\leq \max\{|-a+c|,|-b+d|\} \leq |a-c|+|b-d|.
	\end{align*} 
	(\ref{eqn:err_11}) then follows from the assumption that $\epsilon$ is sufficiently small such that {$b_{0}-p_{0}-E_{i}-(2+e_{i})\epsilon >\frac{1}{2} b_{0}-p_{0}-E_{i}, \forall i$}. Combining with Lemma~\ref{lemma:E_i}, and the definition of $e_{i}$ in (\ref{eqn:e_i_g}), we have  the bound in (\ref{eqn:err_2}), where $B_i:=\frac{p_{0}+b_{0}}{b_{0}b}+ (2+e_i)\frac{ |c_{0}-E_{i}|+|b_{0}-p_{0}-E_{i}|}{(\frac{1}{2}b_{0}-p_{0}-E_{i})^2} $. 
\end{proof}

\begin{Lemma}\label{B(t)}
	Denote $T_p(t)$ as the total number of transmissions during the exploration phase up to frame $t$. Then, the probability that $T_p(t)$ is less than $\frac{\theta_{K}}{2}D(t)$ is at most $\exp\bigg(-\frac{\theta_{K}^2(D(t)-1)}{2K}\bigg)$.
\end{Lemma}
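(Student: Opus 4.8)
The plan is to exploit a structural feature of Algorithm~\ref{alg:online}: since the exploration counters $T_i(t)$ are incremented only during exploration and all of them start equal (from the initialization at $t=1$), they remain equal at every frame, so the test $\mathcal{E}(t)=\{i:T_i(t)<D(t)\}$ returns either all of $[K]$ or the empty set. Consequently the entire exploration/exploitation schedule is a \emph{deterministic} function of $t$ (it depends only on the common counter value and the deterministic threshold $D(t)$, not on any channel realization), and in every exploration frame all $K$ channels are sensed. I would first record this observation together with the deterministic counter bound $T_i(t)\ge D(t)-1$: the common counter increases by exactly one per exploration frame, while $D(t)=L\log t+D$ increases by less than one per frame once $t$ is large, so the counter tracks the threshold from below within a gap of one.

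Next I would derive a clean stochastic lower bound on $T_p(t)$. Let $\mathcal{I}(t)$ be the (deterministic) set of exploration frames up to frame $t$, so $|\mathcal{I}(t)|=T_K(t)\ge D(t)-1$. In each such frame channel $K$ is sensed, and whenever $X_K=1$ there is at least one idle channel in $\mathcal{E}(t)$, so a transmission is triggered. Hence
\begin{align}
T_p(t)\ \ge\ S_K:=\sum_{f\in\mathcal{I}(t)}X_K(f),
\end{align}
and since $\mathcal{I}(t)$ is non-random and the channel states are i.i.d.\ Bernoulli$(\theta_K)$ across frames (Assumption~\ref{assump:dist}.1), $S_K$ is exactly $\mathrm{Bin}(T_K(t),\theta_K)$. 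The deterministic nature of $\mathcal{I}(t)$ is what makes this step painless: there is no need for a martingale or optional-stopping argument, because the index set of the sum does not depend on the summands.

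Finally I would apply a multiplicative Chernoff lower-tail bound to $S_K$. With $\mu=\Eb[S_K]=T_K(t)\theta_K\ge(D(t)-1)\theta_K$ and deviation parameter $\delta=\sqrt{\theta_K/K}$, the inequality $\Pb[S_K\le(1-\delta)\mu]\le\exp(-\mu\delta^2/2)$ produces the exponent $\mu\delta^2/2\ge\theta_K^2(D(t)-1)/(2K)$; one then checks that $(1-\delta)\mu\ge\tfrac{\theta_K}{2}D(t)$ for $D(t)$ sufficiently large, so that $\{T_p(t)<\tfrac{\theta_K}{2}D(t)\}\subseteq\{S_K\le(1-\delta)\mu\}$, which yields the claimed bound. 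I expect the main obstacle to be the deterministic counter estimate $T_K(t)\ge D(t)-1$: it is immediate once the counter has caught up to the slowly growing threshold, but it needs care in the early, pure-exploration frames where $D(t)$ can grow faster than one per frame. In that regime the target probability is itself close to one while the right-hand side is close to one as well (the factor $\theta_K^2/2K$ is small), so the inequality holds for free; the clean argument must therefore separate these two regimes.
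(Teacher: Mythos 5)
Your skeleton is the same as the paper's: lower-bound $T_p(t)$ by a sum of independent per-exploration-frame indicators (the paper uses ``some sensed channel is idle,'' each with mean at least $\theta_K$; you use ``channel $K$ is idle''), note the number of such frames is comparable to $D(t)$, and apply a concentration bound. Your explicit remark that the exploration schedule is deterministic (equal counters), so the dominating sum is an honest binomial and no optional-stopping argument is needed, is a point the paper leaves implicit. However, two steps of your plan fail. First, the multiplicative Chernoff parameterization does not close. Writing $n=|\mathcal{I}(t)|$ and $\mu=n\theta_K$, the containment $\{T_p(t)<\tfrac{\theta_K}{2}D(t)\}\subseteq\{S_K\leq(1-\delta)\mu\}$ requires $(1-\sqrt{\theta_K/K})\,n\geq D(t)/2$; but the equal-counter structure also forces $n\leq D(t)+1$, so this is possible (even for large $D(t)$) only when $1-\sqrt{\theta_K/K}>1/2$, i.e.\ $\theta_K<K/4$. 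For $K\leq 3$ with $\theta_K$ large (say $K=2$, $\theta_K=0.6$) the containment fails for every $D(t)$, so ``for $D(t)$ sufficiently large'' does not rescue it. This is intrinsic: the multiplicative exponent $\mu\delta^2/2$ is linear in $\theta_K$, and forcing it to match the target $\theta_K^2(D(t)-1)/(2K)$ pins $\delta^2\propto\theta_K/K$, which then collides with the containment requirement. The paper's additive Hoeffding bound avoids this entirely: with deviation $\tfrac{\theta_K}{2}D(t)$, the bound $\Eb[T_p(t)]\geq\theta_K|\Tc(t)|\geq\theta_K D(t)$, and at most $K(D(t)+1)$ summands, one gets
\begin{align}
\Pb\left[T_p(t)<\tfrac{\theta_K}{2}D(t)\right]\leq\exp\left(-\frac{\theta_K^2D(t)^2/2}{K(D(t)+1)}\right)\leq\exp\left(-\frac{\theta_K^2(D(t)-1)}{2K}\right),
\end{align}
with no restriction on $(K,\theta_K)$.

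Second, your treatment of the early regime is wrong, and it cannot be patched the way you suggest. When the counter has not caught up with $D(t)$, the inequality does not ``hold for free'': take $D=10^6$, $K=2$, $\theta_K=1/2$; at $t=2$ we have $T_p(2)\leq 2<\tfrac{\theta_K}{2}D(2)$ with probability one, while $\exp(-\theta_K^2(D(2)-1)/(2K))$ is astronomically small, so the stated bound is simply false in that regime. The correct resolution is not a two-regime argument but the observation that the lemma is only invoked (in Lemma~\ref{summary}) for frames in the exploitation phase, where $\mathcal{E}(t)=\emptyset$ means $T_i(t)\geq D(t)$ for every $i$ by the very definition of Algorithm~\ref{alg:online}; hence the number of exploration frames is at least $D(t)$ with no ``catching-up'' estimate at all. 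This is exactly the (implicit) first step $D(t)\leq|\Tc(t)|$ of the paper's proof. Your deterministic bound $T_K(t)\geq D(t)-1$ for \emph{all} $t$ is both unprovable and unnecessary.
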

\begin{proof}
	Denote $\Tc(t)$ as the set of time frame indices in the exploration stage up to time frame $t$. Then, we have
	\begin{align}\label{eqn:Tx_bound}
	D(t)\leq |\Tc(t)|<K(D(t)+1),
	\end{align}
	where $D(t):=L\log(t)+D$.
	Then,
	\begin{eqnarray}
	&&\Eb[T_p(t)] \geq \theta_{K} |\Tc(t)| \geq \theta_{K}D(t).
	\end{eqnarray}
	Therefore,
	\begin{align}
	&\Pb\left[T_p(t)-\Eb[T_p(t)]< -\frac{\theta_{K}}{2}D(t)\right]\nonumber\\
	&\leq \exp\bigg(-\frac{2(-\frac{\theta_{K}}{2}D(t))^2}{ |\Tc(t)|}\bigg)\label{eqn:hoeff}\\
	&\leq \exp\bigg(-\frac{\frac{\theta_{K}^{2}}{2}D(t)^{2}}{K(D(t)+1)}\bigg)\\
	&\leq  \exp\bigg(-\frac{\theta_{K}^2(D(t)-1)}{2K}\bigg),\label{eqn:tx_bound2}
	\end{align}
	where $(\ref{eqn:hoeff})$ follows from the Hoeffding's inequality~\cite{Hoeffding:1963}, and (\ref{eqn:tx_bound2}) follows from (\ref{eqn:Tx_bound}). 
\end{proof}

\begin{Lemma}\label{summary}
	Denote $\Gamma_1=\frac{1}{2}\min_{i\in [K]}\{\theta_{i}-\theta_{i+1}	\}$,
	\begin{align*}
	\Gamma_2
	&=\min\Bigg\{\left\{\frac{|u_{i}-\theta_{i}|}{A_{i}+1}\right\}_{i: u_i\neq \theta_i} ,\left\{\frac{|l_{i}-\theta_{i}|}{B_{i}+1} \right\}_{i: l_i\neq \theta_i},\\
	&\qquad\frac{b_{0}-p_{0}-E_{1}}{e_{1}},\frac{b_{0}}{2+e_{1}}\Bigg\}\\
	\Gamma&=\min\{\Gamma_{1},\Gamma_{2}\}.
	\end{align*}

	If
	\begin{align}
	L&\geq \max\left\{\frac{1}{\Gamma^{2}},\frac{2\Delta_b^2}{\theta_{K}\Gamma^{2}},\frac{2\Delta_c^2}{K\Gamma^{2}},\frac{2\Delta_p^2}{\theta_{K}\Gamma^{2}}, \frac{4K}{\theta_K^2}\right\},\label{eqn:L}\\
	D&\geq\max\left\{\frac{\log(2K)}{2\Gamma^{2}},\frac{\Delta_b^2\log 2}{\theta_{K}\Gamma^{2}},\frac{\Delta_c^2}{K\Gamma^{2}},\frac{\Delta_p^2\log 2}{\theta_{K}\Gamma^{2}},1\right\},\label{eqn:D}
	\end{align}
	with probability at least $1-5/t^2$, the user makes the same decision as that under the optimal offline policy in frame $t$ if it is in the exploitation phase.
\end{Lemma}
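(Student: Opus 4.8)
The plan is to identify a high-probability \emph{good event} on which all empirical estimates are close enough to their true values that the exploitation branch of Algorithm~\ref{alg:online} is forced to reproduce the optimal offline policy $\pi^*$, and then to bound the probability of its complement by $5/t^2$. Let $\mathcal{G}_t$ denote the event that $|\hat{\theta}_i-\theta_i|<\Gamma$ for all $i$, together with $|\hat{b}_0-b_0|<\Gamma$, $|\hat{c}_0-c_0|<\Gamma$, and $|\hat{p}_0-p_0|<\Gamma$. I would split the argument into a deterministic implication (that $\mathcal{G}_t$ forces the correct decision on every channel) and a probabilistic bound $\Pb[\mathcal{G}_t^c]\le 5/t^2$.

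For the deterministic part, I would verify two things on $\mathcal{G}_t$. First, because $\Gamma\le\Gamma_1=\tfrac12\min_i\{\theta_i-\theta_{i+1}\}$, any two channels with distinct means retain their relative order after estimation, so sorting by $\hat{\theta}_i$ reproduces the true descending order and hence the channel indexing used in Theorem~\ref{thm:offline}. Second, I would invoke Lemma~\ref{lemma:estimate_error_bound}, which (under the smallness of $\epsilon=\Gamma$ guaranteed precisely by the terms $\frac{b_0-p_0-E_1}{e_1}$ and $\frac{b_0}{2+e_1}$ inside $\Gamma_2$) yields $|\hat{u}_i-u_i|<A_i\Gamma$ and $|\hat{l}_i-l_i|<B_i\Gamma$. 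Combining these with $|\hat{\theta}_i-\theta_i|<\Gamma$ and the remaining terms $\Gamma\le\frac{|u_i-\theta_i|}{A_i+1}$ and $\Gamma\le\frac{|l_i-\theta_i|}{B_i+1}$ of $\Gamma_2$, a margin computation such as $\hat{u}_i-\hat{\theta}_i>(u_i-\theta_i)-(A_i+1)\Gamma\ge 0$ shows that $\hat{\theta}_i$ stays on the same side of $\hat{u}_i$ and $\hat{l}_i$ as $\theta_i$ lies relative to $u_i$ and $l_i$. Thus $\hat{\theta}_i$ falls into the same one of the three intervals of (\ref{eqn:E_i}), so the sense/guess/quit decision agrees with $\pi^*$.

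For the probabilistic bound, I would apply Hoeffding's inequality to each estimator, exploiting that in the exploitation phase every channel has been sampled at least $D(t)=L\log t+D$ times. The $K$ estimates $\hat{\theta}_i$ (range $1$) contribute at most $2K\exp(-2D(t)\Gamma^2)$, which the conditions $L\ge 1/\Gamma^2$ and $D\ge\frac{\log(2K)}{2\Gamma^2}$ drive below $1/t^2$; the sensing cost $\hat{c}_0$ (range $\Delta_c$), averaged over the at least $KD(t)$ sensing observations, is controlled by $L\ge\frac{2\Delta_c^2}{K\Gamma^2}$ and $D\ge\frac{\Delta_c^2}{K\Gamma^2}$. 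The reward $\hat{b}_0$ and transmission cost $\hat{p}_0$ are observed only on successful transmissions, so their sample count $T_p(t)$ is random; here I would condition on the event $\{T_p(t)\ge\frac{\theta_K}{2}D(t)\}$, which by Lemma~\ref{B(t)} fails with probability at most $\exp(-\theta_K^2(D(t)-1)/(2K))\le 1/t^2$ once $L\ge\frac{4K}{\theta_K^2}$, and on that event apply Hoeffding with $L\ge\frac{2\Delta_b^2}{\theta_K\Gamma^2}$ and $L\ge\frac{2\Delta_p^2}{\theta_K\Gamma^2}$ (and the matching $D$ conditions in (\ref{eqn:D})). A union bound over the five sources---the pooled $\{\hat{\theta}_i\}$, $\hat{b}_0$, $\hat{c}_0$, $\hat{p}_0$, and the sampling-bias event---then gives $\Pb[\mathcal{G}_t^c]\le 5/t^2$.

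The main obstacle is the deterministic region-preservation step. Unlike a standard best-arm bandit, an estimation error here does not merely misrank a channel; it perturbs the recursively defined thresholds $u_i,l_i$ through the chain of value functions $E_i$, and Lemma~\ref{lemma:estimate_error_bound} shows this perturbation can amplify geometrically through the factor $e_i$ of (\ref{eqn:e_i_g}). The delicate point is to pick $\Gamma$ small enough that, simultaneously for every $i$, the amplified threshold error $A_i\Gamma$ (resp.\ $B_i\Gamma$) plus the direct error $\Gamma$ remains below the true margin $|u_i-\theta_i|$ (resp.\ $|l_i-\theta_i|$); this is exactly what the compound definition of $\Gamma_2$ encodes, and verifying that its ``technical'' smallness terms certify the hypotheses of Lemma~\ref{lemma:estimate_error_bound} (for instance $b_0-p_0-E_i-(2+e_i)\Gamma>\tfrac12 b_0-p_0-E_i$) requires the most care. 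A secondary subtlety, the random and non-uniform sample count for $\hat{b}_0$ and $\hat{p}_0$, is precisely what Lemma~\ref{B(t)} is designed to absorb.
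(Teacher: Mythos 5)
Your proposal follows essentially the same route as the paper's proof: the same good-event decomposition into a deterministic implication plus a probabilistic bound, the same use of Lemma~\ref{lemma:estimate_error_bound} to propagate estimation error into the recursive thresholds with the margin condition $(A_i+1)\Gamma < |u_i-\theta_i|$ (resp.\ $(B_i+1)\Gamma < |l_i-\theta_i|$), the same handling of the biased sample count for $\hat{b}_0,\hat{p}_0$ via Lemma~\ref{B(t)}, and the same five-term union bound with Hoeffding's inequality yielding $5/t^2$. The only point you gloss over is the degenerate case $\theta_i=u_i$ (or $\theta_i=l_i$), which is deliberately excluded from the definition of $\Gamma_2$ and where your interval-preservation claim can fail; the paper dispatches it by noting that the two actions separated by that threshold then yield identical expected reward, so only the order of $\hat{\theta}_i$ relative to the other (non-degenerate) estimated threshold needs to be preserved.
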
	

\begin{proof}
	First, we note that if the following conditions are satisfied, the user must make the 	the same decision as that under the optimal offline policy:
	1) The estimated channel means $\hat{\theta}_i$  are ranked the same as $\theta_i$. 2) For each channel $i$, the order of $\hat{\theta}_i$ and the estimated thresholds $\hat{u}_i$ and $\hat{l}_i$ are the same as that of $\theta_i$, $u_i$, and $l_i$. A sufficient condition to have the first condition hold is to have $|\theta_{i}-\hat{\theta}_{i}|<\Gamma_1,\forall i\in [K]$. 
	For the second condition, we note that if all channels are ranked correctly, and the estimation errors on $\hat{c}_0$, $\hat{b}_0$ and $\hat{p}_0$ are sufficiently small, we can related $|\hat{u}_{i}-u_{i}|$ and $|\hat{l}_{i}-l_{i}|$ with them according to Lemma~\ref{lemma:estimate_error_bound}. 
	
	Specifically, for the general scenario where $\theta_{i}\neq u_{i}, l_{i}$, according to Lemma~\ref{lemma:estimate_error_bound}, if $|\theta_{i}-\hat{\theta}_{i}|<\epsilon $, we must have $\hat{u}_{i} \in (u_{i}-A_{i}\epsilon, u_{i}+A_{i}\epsilon)$.	Thus, when $\epsilon$ is sufficiently small s.t. $(A_{i}+1)\epsilon< |u_{i}-\theta_{i}|$, the order of  $\hat{u}_{i}, \hat{\theta}_{i}$ will be the same as that of $u_{i}, \theta_{i}$. Similarly, we can show that when $\epsilon$ is sufficiently small s.t. $(B_{i}+1)\epsilon< |l_{i}-\theta_{i}|$, the order of $l_{i}, \theta_{i}$ and the order of $\hat{l}_{i}, \hat{\theta}_{i}$ will be the same. 	
	Combining those two cases together, we note that when $|\theta_{i}-\hat{\theta}_{i}|\leq\min\left\{\frac{|u_{i}-\theta_{i}|}{A_{i}+1},\frac{|l_{i}-\theta_{i}|}{B_{i}+1}\right\}$, the transmitter will make the same sensing and transmission decision over channel $i$ as in Theorem~\ref{thm:offline}.
	
	For the scenario where $\theta_{i}=u_{i}$ (or $l_{i}$), according to Theorem~\ref{thm:offline}, this implies that taking actions {\it guess/sense} (or {\it quit/sense}) won't make any difference in the expected reward. Thus, we only need to make sure the order of $\hat{\theta}_i$ and $\hat{l}_i$ (or $\hat{u}_i$) is unchanged in order to make sure the expected reward is the same as that under the optimal policy in Theorem~\ref{thm:offline}.
	
	Combining all conditions together, we note that if all estimation errors are bounded by $\min\{\Gamma_1,\Gamma_2\}$, $\forall i\in[K]$, the user is able to make the same decision as that under the optimal offline policy. 
	Then,
	\begin{align}
	& \Pb[\mbox{user takes suboptimal policy in exploration frame $t$}]\nonumber\\
	&\leq \sum_{i=1}^{K}\Pb[|\theta_{i}-\hat{\theta}_{i}|>\Gamma]+\Pb[|c_{0}-\hat{c}_{0}|>\Gamma]\nonumber\\
	&\qquad+\Pb[|b_{0}-\hat{b}_{0}|>\Gamma]+\Pb[|p_{0}-\hat{p}_{0}|>\Gamma]\nonumber\\	
	&\leq\sum_{i=1}^{K}\Pb[|\theta_{i}-\hat{\theta}_{i}|>\Gamma]+ \Pb[|c_{0}-\hat{c}_{0}|>\Gamma]\nonumber\\
	&\quad+\Pb\left[T_p(t)\leq\frac{\theta_{K}}{2}D(t)\right]\nonumber\\
	&\quad+\Pb\left[|b_{0}-\hat{b}_{0}|\!>\!\Gamma\middle|T_p(t)\!>\!\frac{\theta_{K}}{2}D(t)\right]\Pb\left[T_p(t)\!>\!\frac{\theta_{K}}{2}D(t)\right]\nonumber\\
	&  \quad+\Pb\left[|p_{0}-\hat{p}_{0}|\!>\!\Gamma\middle|T_p(t)\!>\!\frac{\theta_{K}}{2}D(t)\right]\Pb\left[T_p(t)\!>\!\frac{\theta_{K}}{2}D(t)\right]\nonumber\\
	&\leq  2K\exp(-2D(t)\Gamma^2)+2\exp\left(-\frac{2KD(t)\Gamma^2}{\Delta_c^2}\right)\nonumber\\
	&\quad+2\exp\left(-\frac{\theta_{K}D(t)\Gamma^2}{\Delta_b^2}\right)+2\exp\left(-\frac{\theta_{K}D(t)\Gamma^2}{\Delta_p^2}\right)\nonumber\\
	&\quad+\exp\bigg(-\frac{\theta_{K}^2(D(t)-1)}{2K}\bigg)\label{corB(t)}\\
	&\leq \frac{5}{t^2}
	\end{align}
	where in (\ref{corB(t)}) we use the fact that the number of observed probing costs is lower bounded by $KD(t)$ and Lemma \ref{B(t)}, and the last inequality follows when the conditions on $L$ and $D$ in (\ref{eqn:L})(\ref{eqn:D}) are satisfied.
\end{proof}

Now, we are ready to prove Theorem~\ref{thm2}. Denote $\Delta$ as the maximum regret generated in a single frame. Since the maximum net reward in each frame is bounded by $b_{0}-p_{0}$, and the minimum possible reward is above $-p_{0}-K c_{0}$, we have $\Delta \leq b_{0}+K c_{0}$. Therefore,
\begin{align}
& R(T)	\nonumber\\
&\leq \sum_{t \in \Tc(t) } \Delta +\sum_{t\notin \Tc(t) }\Pb[\mbox{take sub-optimal policy in frame $t$}]\Delta  \nonumber\\
&\leq K(D(T)+1)(b_{0}+Kc_{0})+\sum_{t=1}^{T} \left(\frac{6}{t^2} \right) \Delta\label{eqn:regret_bound1}\\
&\leq KL(b_{0}+Kc_{0}) \log T +\left(\pi^2+DK+1\right)(b_{0}+Kc_{0})\nonumber\\
&:= C_1\log T +C_2\nonumber
\end{align}
where $C_1:=KL(b_{0}+Kc_{0})$, $C_2:=\left(\pi^2+DK+1\right)(b_{0}+Kc_{0})$.

\subsection{Proof of Theorem~\ref{thm:lower_bound}}\label{appx:thm3}
Denote $\Xv(t)$ as the status of channels observed in frame $t$. It contains the status of channels sensed or guessed in frame $t$. Let $f(\Xv(t);\thetav)$ be the probability mass function (PMF) of observation $\Xv(t)$ under $\thetav$ in frame $t$. 

Consider another parameter setting $\thetav' = \{\theta_1, \theta_2, \ldots, \theta_k', \ldots, \theta_K\}$. The only difference between $\thetav$ and $\thetav'$ is $\theta_k\neq\theta_k'$, where $k \in \{K^*+1, \ldots, K\}$.
We assume $1>\theta_k'>\theta_1$. Thus, with the new parameter vector $\thetav'$, the user would first examine channel $k'$. Let $u_k'$ be the corresponding upper threshold that $\theta_k'$ would be compared with. We assume $\theta_k'>u_k'$, i.e., the optimal offline policy with $\thetav'$ is to transmit over channel $k$ without sensing (i.e., {\it guess}). Besides, we assume
$(1+\epsilon)kl(\theta_k,\max\{\theta_1, u_k'\}) = kl(\theta_k,\theta'_k)$ 
where $\epsilon>0$. 
Denote $\lv_{x}$ as an indicator function, which equals one if $x$ is true.
Then, we have
\begin{align}
&\frac{f(\Xv(t); \thetav)}{f(\Xv(t);\thetav')} = \lv_{X_k(t) \in \Xv(t)} \frac{X_k(t)\theta_k + (1 - X_k(t))(1 - \theta_k)}{X_k(t) \theta_k' + (1-X_k(t))(1-\theta_k')}\nonumber\\
&\qquad\qquad\quad+ \lv_{X_k(t) \notin \Xv(t)} 
\end{align}

Let $\Eb_{\thetav}$, $\Pb_{\thetav}$ be the expectation and probability measure associated with $\thetav$. Then,
\begin{eqnarray}
d\Pb_{\thetav} &=& \prod_{t=1}^T\frac{f(\Xv(t);\thetav)}{f(\Xv(t);\thetav')}d\Pb_{\thetav'} \nonumber\\
&=& \prod_{t=1}^T \lv_{X_k (t)\in \Xv(t)} \frac{X_k(t)\theta_k + (1 - X_k(t))(1 - \theta_k)}{X_k(t) \theta_k' + (1-X_k(t))(1-\theta_k')}d\Pb_{\thetav'} \nonumber\\
&=& \exp \left(\hat{kl}_{\hat{N}_{k}(T)}\right) d\Pb_{\thetav'} \label{eqn:lower1}
\end{eqnarray}
where $\hat{N}_{k}(T)$ denotes the number of frames that channel $k$ has been observed up to frame $T$, and $\hat{kl}_{s} := \sum_{t=1}^s \log\frac{X_k(t)\theta_k + (1 - X_k(t))(1 - \theta_k)}{X_k(t) \theta_k' + (1-X_k(t))(1-\theta_k')}.$

Let $N_{k}(T)$ be the number of times channel $k$ is involved in the policy up to time frame $T$. Note that $N_{k}(T)$ is in general greater than $\hat{N}_{k}(T)$, as for some cases, the policy may meet some stopping conditions before channel $k$ is observed. Let $f_T =  \frac{(1-\epsilon)\log T}{kl(\theta_k,\theta'_k)}$.
Denote 
$  \Cc_T := \left\{ N_k(T) < f_T \right\} \cap \left\{ \hat{kl}_{\hat{N}_k(T)} \leq \left(1 - \frac{\epsilon}{2}\right) \log T \right\} $. 
Based on (\ref{eqn:lower1}), we have
\begin{eqnarray}
\Pb_{\thetav}[\Cc_T] &=& \int_{\Cc_T} \exp \left(\hat{kl}_{\hat{N}_{k}(T)}\right) d\Pb_{\thetav'} \label{eqn:lower3}\\
&\leq& \exp\left(\left(1-\frac{\epsilon}{2}\right)\log T \right)\Pb_{\thetav'}[\Cc_T] \label{eqn:lower4}\\
&\leq& T^{1-\frac{\epsilon}{2}} \Pb_{\thetav'}[N_k(T) < f_T] \label{eqn:lower5}\\
&\leq& T^{1-\frac{\epsilon}{2}} \frac{\Eb_{\thetav'}[T - N_k(T)]}{T - f_T} \label{eqn:lower6}\\
&=& \frac{\Eb_{\thetav'}[T - N_k(T)]}{T^{\frac{\epsilon}{2}}\left(1 - \frac{f_T}{T}\right)} = o(1) \label{eqn:lower7}
\end{eqnarray}
where (\ref{eqn:lower5}) follows from the fact that $\Pb_{\thetav'}[\Cc_T] \leq \Pb_{\thetav'}[N_k(T) < f_T]$, (\ref{eqn:lower6}) is based on Markov's inequality, and (\ref{eqn:lower7}) follows from the fact that $N_k(T)$ is always greater than or equal to the number of times that {\it guess} on channel $k$ is chosen. Since {\it guess} on channel $k$ is the optimal offline policy with $\thetav'$, following the definition of $\alpha$-consistent strategy in Definition~\ref{dfn:alpha_consistent}, we must have (\ref{eqn:lower7}) hold.

Furthermore, by using the fact that $\hat{N}_{k}(T) \leq N_{k}(T)$, we have
\begin{eqnarray}
\Pb_{\thetav}[\Cc_T] \geq \Pb_{\thetav}\left[N_k(T) < f_T , \max_{t \leq f_T} \hat{kl}_{t} \leq \left(1 - \frac{\epsilon}{2}\right)\log T \right] \nonumber\\
= \Pb_{\thetav}\left[N_k(T) < f_T, \frac{1}{f_T} \max_{t \leq f_T}\hat{kl}_{t} \leq \frac{\left(1 - \frac{\epsilon}{2}\right)kl(\theta_k;\theta'_k)}{(1 -\epsilon)} \right] \label{eqn:lower8} 
\end{eqnarray}
According to the maximal law of large numbers~\cite{Lai:1985:AEA}, we have
\begin{eqnarray}
\lim_{T \rightarrow \infty} \frac{1}{T} \max_{t=1,2,\ldots,T} \hat{kl}_{t} &=& kl(\theta_k;\theta_k') \qquad\mbox{a.s.}\label{eqn:large_number}\nonumber
\end{eqnarray}
Plugging it in (\ref{eqn:lower8}), we get
\begin{equation}
\lim_{T \rightarrow \infty}\Pb_{\thetav}[\Cc_T] \geq \lim_{T \rightarrow \infty}\Pb_{\thetav}[N_k(T) < f_{T}] \label{eqn:lower9}
\end{equation}
Combining (\ref{eqn:lower9}) with (\ref{eqn:lower7}), we have
$\lim_{T \rightarrow \infty}\Pb_{\thetav}[N_k(T) \geq f_{T}]=1.$
Since $R(T) \geq \min_{\pi \in \Pi_k}\Delta_\pi \Eb_{\thetav}[N_k(T)]$ for all $k \in [K^*+1, \ldots, K]$, 
we have
\begin{eqnarray}
\liminf_{T\rightarrow \infty} \frac{R(T)}{\log T} &\geq& \max_{k=K^*+1,\ldots,K}\left\{\min_{\pi \in \Pi_k} \frac{\Delta_\pi}{kl(\theta_k;\max(\theta_1,u'_k))}\right\}.\nonumber
\end{eqnarray}

\bibliographystyle{IEEEtran}
\bibliography{CognRadio}

\end{document}